\colorlet{shadecolor}{yellow}
\newtheorem{theorem}{Theorem}
\newtheorem{proposition}[theorem]{Proposition}
\begin{document}
   \title{ \huge Robust Beamforming Design for Fairness-Aware Energy Efficiency Maximization in RIS-Assisted mmWave Communications} \vspace{-0.3cm}
    	\newgeometry {top=25.4mm,left=19.1mm, right= 19.1mm,bottom =19.1mm}%
\author{Ahmed Magbool,~\IEEEmembership{Graduate Student Member,~IEEE,} 
Vaibhav Kumar,~\IEEEmembership{Member,~IEEE,} 
\\  \vspace{-0.3cm} and Mark F. Flanagan,~\IEEEmembership{Senior Member,~IEEE}\thanks{This publication has emanated from research conducted with the financial
support of Science Foundation Ireland under Grant Number 13/RC/2077\_P2. \par
Ahmed Magbool and Mark F. Flanagan are with the School of Electrical and Electronic Engineering, University College Dublin, Belfield, Dublin 4, Dublin 4, D04 V1W8, Ireland (e-mail: ahmed.magbool@ucdconnect.ie, mark.flanagan@ieee.org). \par
Vaibhav Kumar was with the School of Electrical and Electronic Engineering, University College Dublin, Dublin 4, D04 V1W8, Ireland. He is now with Engineering Division, New York University Abu Dhabi (NYUAD), Abu Dhabi 129188, UAE (e-mail: vaibhav.kumar@ieee.org).} \vspace{-0.9cm}}

\maketitle

\begin{abstract}
Users in millimeter-wave (mmWave) systems often exhibit diverse channel strengths, which can negatively impact user fairness in resource allocation. Moreover, exact channel state information (CSI) may not be available at the transmitter, rendering suboptimal resource allocation. In this paper, we address these issues within the context of energy efficiency maximization in RIS-assisted mmWave systems. We first derive a tractable lower bound on the achievable sum rate, taking into account CSI errors. Subsequently, we formulate the optimization problem, targeting maximizing the system energy efficiency while maintaining a minimum Jain's fairness index controlled by a tunable design parameter. The optimization problem is very challenging due to the coupling of the optimization variables in the objective function and the fairness constraint, as well as the existence of non-convex equality and fractional constraints. To solve the optimization problem, we employ the penalty dual decomposition method, together with a projected gradient ascent based alternating optimization procedure. Simulation results demonstrate that the proposed algorithm can achieve an optimal energy efficiency for a prescribed Jain's fairness index. In addition, adjusting the fairness design parameter can yield a favorable trade-off between energy efficiency and user fairness compared to methods that exclusively focus on optimizing one of these metrics.
\end{abstract}
\begin{IEEEkeywords}
Reconfigurable intelligent surfaces, mmWave communications, energy efficiency, user fairness, imperfect CSI, projected gradient ascent, penalty dual decomposition method.
 \end{IEEEkeywords}
\vspace{-0.6cm}
\IEEEpeerreviewmaketitle
\section{Introduction} \label{sec:intro}
The millimeter-wave (mmWave) band, extending from \unit[28]{GHz} to \unit[100]{GHz}, has been exploited in existing fifth-generation (5G) wireless networks and is planned for use in the upcoming sixth-generation (6G) wireless networks~\cite{6G_survey,THz_NOMA}. However, signals at such high frequencies experience extreme propagation conditions, limiting the system's capabilities~\cite{mmwave_comm_survey}.

The utilization of reconfigurable intelligent surfaces (RISs) is regarded as a promising solution to address the challenge of obscured line-of-sight (LoS) in mmWave communications. RISs can establish a programmable propagation environment by manipulating the phase of the incident signal through nearly passive reflecting elements. Consequently, higher spectral efficiency can be attained with reduced power consumption compared to the corresponding non-RIS-aided systems~\cite{RIS_mot1, RIS_mot3, RIS_mot4}.

Sustainability is a crucial aspect of 6G systems, aiming to achieve an energy efficiency (EE) target of \unit[1]{Tbit/Joule}~\cite{6G_goals, massive_mimoo, sus_EE}. With that in mind, the escalating power consumption associated with massive data transmission within 6G networks has led researchers to focus on EE maximization for 6G networks~\cite{EE_opt1, EE_opt2}. A milestone work in EE optimization for RIS-aided systems was presented in~\cite{EE_opt3}, offering two solutions to determine the optimal beamforming design. The first solution used a gradient-based approach, while the second relied on a fractional-programming-based alternating optimization (AO) method. In~\cite{EE_opt4}, a genetic approach was proposed, employing a covariance matrix adaptation evolution strategy and Dinkelbach's method to maximize the system EE for terahertz (THz) systems. Optimal beamformer design, coupled with an asymptotic channel capacity characterization under hardware impairments, was addressed in~\cite{EE_opt5}. The problem of secrecy EE maximization in a multiple-input multiple-output multiple eavesdropper (MIMOME) channel was tackled in~\cite{V_paper}, where a stationary solution was obtained using a penalty-based approach. In addition, a long short-term memory (LSTM)-based solution was explored to design an energy-efficient RIS-aided system in~\cite{EE_opt6}.

 However, the aforementioned works overlooked two crucial issues. The first is the need to ensure an adequate degree of fairness among users. Since different users may have distinct quality-of-service (QoS) requirements, \textit{proportional fairness} should be imposed, ensuring that the \textit{weighted rates} of all users together satisfy a fairness measure~\cite{w_rate}. Ignoring this aspect leads to allocating minimal resources to users with weaker channels, as their contribution to the EE objective function is very minor~\cite{ICC_paper}. This problem is more pronounced in the mmWave band due to the severe path loss, resulting in users close to the transmitter having much stronger channel gains than those farther away from it. While incorporating rate constraints into the optimization problem (e.g., by setting a specific threshold for each user's rate) can impose some level of fairness, simulation results in~\cite{EE_opt3} indicate that these threshold constraints can severely affect the EE. Moreover, depending on the choice of these rate thresholds, a feasible solution to the optimization problem may not exist. Second, while the above-referenced papers assumed perfect channel state information (CSI) knowledge at the transmitter, this assumption does not hold in practice.
 
In this paper, we target maximizing the system EE while ensuring user fairness for mmWave multiuser systems with an array-of-subarrays (AoSA) hybrid beamforming architecture. We choose this architecture because it is more energy-efficient compared to the a fully-connected digital beamforming architecture~\cite{AoSA_motivation, HB}. The proposed method is robust to CSI imperfections and relies only on the assumption of bounded CSI error. The main contributions of this paper are summarized as follows:
 \begin{itemize}
     \item For the first time in the literature, we present a solution to the problem of EE maximization for a transmitter with AoSA architecture employing hybrid beamforming in the context of RIS-assisted mmWave communications. Our solution is sufficiently general to cover for the case of multiple-antenna receivers capable of decoding multiple streams simultaneously. Unlike previous works on this topic, which often assume a special precoder (e.g., the zero-forcing (ZF) precoder in~\cite{EE_opt3}), our approach addresses a more general case with the presence of multi-user interference (MUI). 
     \item We address the practical scenario where perfect CSI is not accessible at the transmitter. To model the channel estimation error, we adopt the widely-used bounded-error model~\cite{CEE11,CEE2}, which constrains the estimation error of the cascaded channel within a sphere of a specific radius. Employing this bounded error model, we derive a tractable lower bound on the achievable sum rate through the triangle and Cauchy–Schwarz inequalities.
     \item We formulate the EE maximization problem with a constraint on users' fairness among users, characterized by Jain's fairness index. The fairness constraint imposes specific proportions between the users' weighted rates, leading to an improved system EE compared to the standard formulation found in the literature, which typically involves minimum rate thresholds.
     \item To address the challenging non-convex optimization problem, we employ a penalty dual decomposition (PDD) method by incorporating the fairness constraint into the objective function. Subsequently, we utilize a projected gradient based AO algorithm to determine the optimal digital precoder, analog precoder, RIS reflection matrix, and receive combiners that maximize the system EE.
     \item We present comprehensive numerical results to demonstrate the performance of the proposed method. Our findings indicate that the proposed method can achieve superior trade-offs between EE and fairness compared to the state-of-the-art EE and fairness maximization techniques discussed in~\cite{EE_opt3} and~\cite{fairness_3}.
 \end{itemize}

A related problem was previously addressed in our work~\cite{ICC_paper}. However, that study was limited to single-antenna receivers and MUI-free reception, while the current work extends the scope to include the AoSA architecture, multiple-antenna receivers, and the presence of MUI. Additionally, in~\cite{ICC_paper}, perfect CSI knowledge is assumed to be available at the transmitter, whereas this work emphasizes robust system design in the presence of channel estimation errors. While~\cite{ICC_paper} aimed at maximizing both EE and user fairness, this paper concentrates on EE maximization with a fairness constraint. Given the nature of the more complicated optimization problem in this work, we utilize the projected gradient ascent and the PDD methods, which are fast and numerically efficient approaches for large-scale systems. The proposed method does not require specific convex solvers such as CVX or CVXPY. 
 
 The remainder of the paper is organized as follows. Section~\ref{sec:sys_model} presents the system model. Section~\ref{sec:prob_form} formulates the EE maximization problem with fairness constraint. Section~\ref{sec:sol_S1} presents a detailed description of the proposed penalty-based AO framework. Simulation results and associated discussion are presented in Section~\ref{sec:sim}. Finally, the paper is concluded in Section~\ref{sec:conc}.

 \textit{Notations:} Bold lowercase and uppercase letters denote vectors and matrices, respectively. $|\cdot|$ represents the magnitude of a complex number (for a complex vector, it is assumed to operate element-wise). $[\mathbf{a}]_i$ denotes the $i$-th element of the vector $\mathbf{a}$. $||\mathbf{\cdot} ||_2$ and $||\mathbf{\cdot} ||_\mathsf{F}$ represent the Euclidean vector norm and the Frobenius matrix norm, respectively. $\mathbf{(\cdot)}^\mathsf{T}$, $\mathbf{(\cdot)}^\mathsf{H}$ and $\mathbf{(\cdot)}^{-1}$ denote the matrix transpose, matrix conjugate transpose, and matrix inverse, respectively. $\mathbf{I}$ represents the identity matrix. $\boldsymbol{1}$ denotes a column vector all of whose elements are equal to one. $\text{diag}(a_1,\dots,a_N)$ denotes a diagonal matrix with the elements $a_1,\dots,a_N$ on the main diagonal and zeros elsewhere, while $\text{blkdiag}(\mathbf{a}_1,\dots,\mathbf{a}_N)$ represents a block diagonal matrix with the vectors $\mathbf{a}_1,\dots,\mathbf{a}_N$ on the main block diagonal and zeros elsewhere. $\mathbb{C}$ indicates the set of complex numbers, and $j = \sqrt{-1}$ is the imaginary unit. $\mathbb{E}\{ \cdot \}$ stands for the expectation operator. $\mathcal{CN}(\mathbf{0},\mathbf{B})$ represents a complex Gaussian random vector with a mean of $\mathbf{0}$ and a covariance matrix of $\mathbf{B}$. $\nabla_{\mathbf a} f(\cdot)$ represents the complex-valued gradient of the real-valued function $f(\cdot)$ with respect to $\mathbf a$. $\odot$ and $\otimes$ denote the Khatri-Rao product and the Kronecker product, respectively. $\mathcal{O}(\cdot)$ represents the big-O notation to denote the computational complexity of an algorithm.

\vspace{-0.2cm}

\section{System Model} \label{sec:sys_model} 
\vspace{-0.1cm}
We consider a multi-user downlink mmWave system in which a base-station (BS) serves $K$ users as shown in Fig.~\ref{fig:sys_model}. The BS needs to transmit $L$ data symbols to each user, resulting in a total of $KL$ data streams. The BS employs hybrid digital and analog precoding, where the analog precoder comprises $M$ radio-frequency (RF) chains, each connected to a sub-array (SA) of $N_\mathsf{T}$ antenna elements (AEs). We assume that the direct paths between the BS and the user equipment (UEs) are blocked and the communication is achieved via an RIS with $N_\text{RIS}$ reflecting elements. At the receiver side, each UE is assumed to have $N_\mathsf{R}$ AEs, and employs a digital combiner to decode $L$ data streams, simultaneously\footnote{Due to constraints related to signal processing capability, power consumption, and hardware design, the number of streams for each UE (i.e., $L$) is typically small. In particular, the following conditions must hold: $LK \leq M \leq N_\mathsf{T}$ and $L\leq N_\mathsf{R}$.}\footnote{This is a general system model and can be simplified into several special cases. For instance, the case of single-antenna receivers can be derived by setting  $N_\mathsf{R} = L = 1$, while $N_\mathsf{T} = 1$  represents the scenario of fully digital antennas at the BS.}. In the following subsections, we describe the channel model, the signal model, and the channel estimation error model.
\begin{figure}
         \centering
         \includegraphics[width=1\columnwidth]{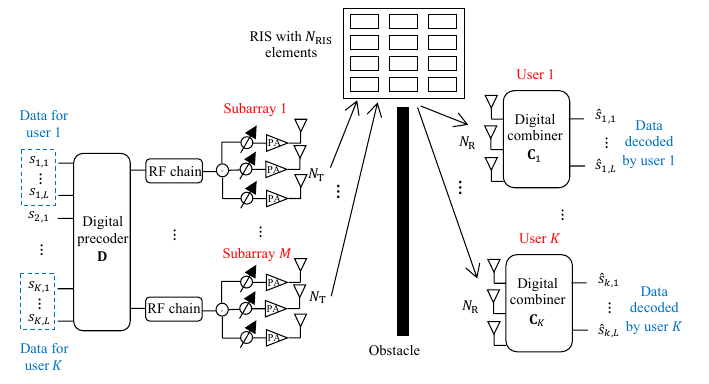}
        \caption{System model for the proposed RIS-assisted mmWave system.}
        \label{fig:sys_model}
        \vspace{-0.6cm}
\end{figure}

\vspace{-0.4cm}
\subsection{Channel Model}
We adopt the widely used far-field Saleh-Valenzuela (SV) channel model~\cite{ch_model_1,ch_model_2,ch_model_3} to represent the BS-RIS and RIS-UE channels. We can write the BS-RIS channel as
\begin{equation}
    \mathbf{H}_{\mathsf{T}} =  \sum_{\ell = 0}^{L_\mathsf{T}-1} \alpha_{\mathsf{T},\ell} \mathbf{a}_\text{RIS} (\vartheta_{\mathsf{R},\ell}, \varphi_{\mathsf{R},\ell}) \mathbf{a}_\text{BS} (\phi_{\mathsf{T},\ell})^\mathsf{H},
    \label{sv_channel_model_1}
\end{equation}
 where $L_\mathsf{T}$ is the number of paths between the BS and the RIS with $\ell=0$ representing the LoS path and $\ell=1,\dots,L_\mathsf{T}-1$ representing the non-line-of-sight (NLoS) paths, and $ \alpha_{\mathsf{T},\ell}$ is the complex path gain of the $\ell$-th path. In addition, $\mathbf{a}_\text{BS} (\phi_{\mathsf{T},\ell})$ is the beam steering vector at the BS, which is a function of the angle of departure (AoD) from BS $\phi_{\mathsf{T},\ell}$ of the $\ell$-th path from the BS. Assuming that the distance between two adjacent AEs is $\lambda_{\mathrm c}/2$, with $\lambda_{\mathrm c}$ indicating the carrier wavelength, and the distance between the first antennas in two adjacent SAs is $d_\text{SA}\lambda_{\mathrm c}/2$, the array response of the $n$-th AE at the $m$-th SA at the BS can be written as
\begin{equation}
  a_{\text{BS}}^{m,n} (\phi_{\mathsf{T},\ell}) =  \exp \Big( { j \pi \big((n-1) + (m-1) d_{\text{SA}} \big) \cos (\phi_{\mathsf{T},\ell}}) \Big) .
\end{equation}

 Also, $\mathbf{a}_\text{RIS} (\vartheta_{\mathsf{R},\ell}, \varphi_{\mathsf{R},\ell})$ is the beam steering vector of the RIS, which is a function of the azimuth and elevation angles of arrival (AoA) to the RIS,  $\vartheta_{\mathsf{R},\ell}$ and $\varphi_{\mathsf{R},\ell}$, respectively. Assuming that the horizontal and vertical distances between two RIS elements is $\lambda_{\mathrm c}/2$, the element in the $n_x$-th row and $n_y$-th column in the uniform planar array (UPA) RIS have an array response of
\begin{equation}
\begin{split}
     a_{\text{RIS}}^{n_x,n_y} (\vartheta_{\mathsf{R},\ell}, \varphi_{\mathsf{R},\ell}) & =  \exp \Big( j \pi \big(  \varpi_{n_x}(\vartheta_{\mathsf{R},\ell},\varphi_{\mathsf{R},\ell}) \\
     & + \varpi_{n_y}(\vartheta_{\mathsf{R},\ell},\varphi_{\mathsf{R},\ell}) \big)  \Big), 
    \end{split}
\end{equation}
\hspace{-0.4cm} where $ \varpi_{n_x}(\vartheta_{\mathsf{R},\ell},\varphi_{\mathsf{R},\ell}) = (n_x - 1) \cos (\vartheta_{\mathsf{R},\ell}) \sin (\varphi_{\mathsf{R},\ell})$ and $ \varpi_{n_y}(\vartheta_{\mathsf{R},\ell},\varphi_{\mathsf{R},\ell}) = (n_y - 1) \sin (\vartheta_{\mathsf{R},\ell}) \sin (\varphi_{\mathsf{R},\ell})$.

 Similarly, the channel between the RIS and the $k$-th UE is
\begin{equation}
     \mathbf{H}_{\mathsf{R},k} \hspace{-0.1cm} = \sum_{\ell = 0}^{L_{\mathsf{R},k}-1} \hspace{-0.25cm} \alpha_{\mathsf{R},k,\ell} \mathbf{a}_{\text{UE},k} (\Xi_{\mathsf{R},k,\ell}) \mathbf{a}_\text{RIS} (\vartheta_{\mathsf{T},\ell}, \varphi_{\mathsf{T},\ell})^\mathsf{H},
    \label{sv_channel_model_2}
\end{equation}
 where $L_{\mathsf{R},k}$ is the number of paths between the RIS and the $k$-th UE, $\alpha_{\mathsf{R},k,\ell} $ is the path gain of the $\ell$-th path between the RIS and the $k$-th UE, and $\vartheta_{\mathsf{T},\ell}$ and $\varphi_{\mathsf{T},\ell}$ are the azimuth and elevation AoD from the RIS, respectively. Additionally, $\mathbf{a}_{\text{UE},k} (\Xi_{\mathsf{R},k,\ell})$ is array response vector for the $k$-th UE, which can be written as (assuming $\lambda_{\mathrm c}/2$ spacing between AEs)
\begin{equation}
    \mathbf{a}_{\text{UE},k} (\Xi_{\mathsf{R},k,\ell}) =  \big[1,\dots ,\exp \big(j \pi (N_\mathsf{R}-1) \cos(\Xi_{\mathsf{R},k,\ell}) \big) \big],
\end{equation}
where $\Xi_{\mathsf{R},k,\ell}$ is the AoA of the $\ell$-th path to the $k$-th UE.

\subsection{Signal Model}
The signal transmitted by the BS can be expressed as
\begin{equation}
    \mathbf{x} = \mathbf{A} \mathbf{D} \mathbf{s},
\end{equation}
where $\mathbf{s} = [s_{1,1},\dots,s_{K,L}]^\mathsf{T} \in \mathbb{C}^{KL \times 1}$ is a vector containing the transmitted symbols, with $s_{k,\ell}$ representing the $\ell$-th symbol to be transmitted to the $k$-th user, where $k \in \{1, 2, ..., K\}$ and $\ell \in \{1, 2, ..., L\}$, and $\mathbb{E}\{ \mathbf{s}\mathbf{s}^\mathsf{H} \} = \mathbf{I}$. Also, $\mathbf{D} = [\mathbf{d}_{1,1}, \dots, \mathbf{d}_{K,L}]  \in \mathbb{C}^{M \times KL}$ is the digital precoder satisfying the power budget constraint, i.e., $|| \mathbf{D} ||_\mathsf{F}^2 \leq P_\text{max}$, with $ P_\text{max}$ denoting the total transmit power budget and $\mathbf{d}_{k,\ell}  \in \mathbb{C}^{M \times 1}$ denoting the digital precoder for the $\ell$-th data stream intended for the $k$-th user. In addition, $\mathbf{A} = \text{blkdiag}(\mathbf{a}_{1} ,\dots, \mathbf{a}_{M} ) \in \mathbb{C}^{MN_\mathsf{T} \times M}$ is the analog precoder, with $\mathbf{a}_{m} \in \mathbb{C}^{N_\mathsf{T} \times 1} $ representing the analog precoder employed by the $m$-th RF chain, with elements obeying the constant modulus (CM) constraint, i.e., $|[ \mathbf{a}_{m} ]_i| = 1 / \sqrt{N_\mathsf{T}}$.

At the other end, the post-combining received signal by the $k$-th user to decode the $\ell$-th data stream is
\begin{equation}
    y_{k,\ell} = \mathbf{c}_{k,\ell}^\mathsf{H} \mathbf{G}_k (\boldsymbol{\Theta}) \mathbf{x} + \mathbf{c}_{k,\ell}^\mathsf{H} \mathbf{n}_k,
    \label{eq:rec_signal}
\end{equation}
where $\mathbf{c}_{k,\ell}  \in \mathbb{C}^{N_\mathsf{R} \times 1}$ is the combiner employed by the $k$-th user to decode the $\ell$-th data stream that satisfies the receive power budget constraint $\sum_{\ell = 1}^{L} || \mathbf{c}_{k,\ell} ||_2^2 = P_{\mathsf{R},k}$, and $ \mathbf{n}_k \sim \mathcal{CN} (\mathbf{0}, \sigma^2 \mathbf{I})$ is the complex additive white Gaussian  noise (AWGN). Furthermore, $\mathbf{G}_k (\boldsymbol{\Theta}) \in \mathbb{C}^{N_\mathsf{R} \times MN_\mathsf{T}}$ denotes the overall channel matrix for the $k$-th user, which can be written as
\begin{equation}
     \mathbf{G}_k (\boldsymbol{\Theta}) =   \mathbf{H}_{\mathsf{R},k} \boldsymbol{\Theta} \mathbf{H}_{\mathsf{T}},
    \label{eq:ccc}
\end{equation}
where $\boldsymbol{\Theta}= \text{diag}(\theta_1,\dots,\theta_{N_\text{RIS}}) \in \mathbb{C}^{N_\text{RIS} \times N_\text{RIS} }$ is the RIS reflection coefficient with $| \theta_n| = 1$ for all $ n \in \{1,\dots, N_\text{RIS} \}$. However, in practice, due to channel estimation errors, perfect CSI will in general not be available at the transmitter. In the next subsection, we introduce the channel estimation error model.

\vspace{-0.7cm}
\subsection{Channel Estimation Error Model}
\vspace{-0.1cm}
To characterize the channel estimation error model, we express the channel of the $k$-th user as
\begin{equation}
    \mathbf{G}_k (\boldsymbol{\Theta}) = \hat{\mathbf{G}}_k (\boldsymbol{\Theta}) + \boldsymbol{\Lambda}_k (\boldsymbol{\Theta}),
    \label{eq:sbs}
\end{equation}
where $\hat{\mathbf{G}}_k (\boldsymbol{\Theta})$ represents the estimated channel matrix at the BS, and $\boldsymbol{\Lambda}_k (\boldsymbol{\Theta})$ denotes the error matrix resulting from imperfect estimation of $\mathbf{G}_k (\boldsymbol{\Theta})$. 

As RISs are inherently passive, they lack the capability of directly transmitting or processing pilots. Consequently, most channel estimation algorithms focus on estimating the cascaded channel~\cite{CE_sur}, which can be obtained by vectorizing~\eqref{eq:sbs} as
\begin{equation}
    \text{vec} (\mathbf{G}_k (\boldsymbol{\Theta})) = \underbrace{(\mathbf{H}_{\mathsf{T}}^\mathsf{T} \odot \mathbf{H}_{\mathsf{R},k})}_{\mathbf{H}_k} \boldsymbol{\theta} = (\hat{\mathbf{H}}_k + \boldsymbol{\Delta}_k) \boldsymbol{\theta},
    \label{eq:vec_channel}
\end{equation}
where $\boldsymbol{\theta} = \text{diag} (\boldsymbol{\Theta})$, $\hat{\mathbf{H}}_k$ is the estimated version of $\mathbf{H}_k$, and $\boldsymbol{\Delta}_k$ is the error matrix incurred during the estimation of $\mathbf{H}_k$. The relationship between $\boldsymbol{\Delta}_k$ and $\boldsymbol{\Lambda}_k (\boldsymbol{\Theta})$ in~\eqref{eq:sbs} can be expressed as $\text{vec} \big(\boldsymbol{\Lambda}_k (\boldsymbol{\Theta})\big) = \boldsymbol{\Delta}_k \boldsymbol{\theta}$.

In this paper, we assume that the channel estimation error is bounded within a sphere of radius $\delta_k$. This bounded error model is widely adopted in the literature due to its effectiveness in characterizing quantization errors which lie within a bounded region~\cite{CEE11,CEE2}.

\section{Problem Formulation}  \label{sec:prob_form}
In this section, we formulate the problem of maximizing the system EE while adhering to a fairness constraint. We begin by expressing the rate of the $\ell$-th stream decoded by the $k$-th user as
\begin{equation}
\begin{split}
     & R_{k,\ell}  (\mathbf{D}, \mathbf{A}, \boldsymbol{\Theta}, \mathbf{c}_{k,\ell} ) =  \log_2 \bigg(  1 + \\
       &  \frac{| \mathbf{c}_{k,\ell}^\mathsf{H} \mathbf{G}_k (\boldsymbol{\Theta}) \mathbf{A} \mathbf{d}_{k,\ell }|^2}{\underset{[i,v] \neq [k,\ell]}{\sum_{i=1}^K \sum_{ {v=1}}^L}| \mathbf{c}_{k,\ell}^\mathsf{H} \mathbf{G}_k (\boldsymbol{\Theta}) \mathbf{A} \mathbf{d}_{i,v } |^2 + \mathbb{E}\{ |\mathbf{c}_{k,\ell}^\mathsf{H} \mathbf{n}_k|^2 \}} \bigg).
     \end{split}
    \label{eq:Rk}
\end{equation}

In addition, the total power consumption of the system can be expressed as (c.f.~\cite{EE_opt3})
\begin{equation}
\begin{split}
    P_{\text{tot}} (\mathbf{D})  & = \underbrace{P_\text{BS}+ \xi|| \mathbf{D}||_\mathsf{F}^2 + MN_\mathsf{T} P_{\text{RF,T}}}_{\text{BS power consumption}} \\ &  + \underbrace{N_\text{RIS} P_{\theta}}_{\text{RIS power consumption}}  +\underbrace{ \sum_{k=1}^K ( P_{\text{UE},k}  + P_{\text{R},k} )}_{\text{UEs power consumption}},
    \end{split}
    \label{p_cons}
\end{equation}
where $P_\text{BS}$ and $P_{\text{UE},k}$ are the total hardware static power consumption at the BS and at the $k$-th UE, respectively, $\xi$ is the power amplification factor at the BS, $P_{\text{RF,T}}$ and $P_{\theta}$ are the power consumption of each phase shifter at the BS and at the RIS, respectively. \newpage

We can then express the system EE as the sum rate divided by the total power consumption, i.e.,
\begin{equation}
    \eta (\mathbf{D},\mathbf{A}, \boldsymbol{\Theta}, \mathbf{C}) = \frac{R_\text{sum} (\mathbf{D},\mathbf{A}, \boldsymbol{\Theta}, \mathbf{C})}{P_{\text{tot}} (\mathbf{D})},
    \label{eq:EE}
\end{equation}
where $R_\text{sum}  (\mathbf{D},\mathbf{A}, \boldsymbol{\Theta}, \mathbf{C}) \hspace{-0.1cm} \triangleq \hspace{-0.1cm} 
 \sum_{K=1}^K \sum_{ \ell=1}^L \hspace{-0.1cm} 
 R_{k,\ell} (\mathbf{D}, \mathbf{A}, \boldsymbol{\Theta}, \mathbf{c}_{k,\ell} )$ and $\mathbf{C} \triangleq [\mathbf{c}_{1,1},\dots, \mathbf{c}_{K,L}]$. The goal is to allocate the available resources to maximize the system EE. However, as the channel estimation error matrices $\boldsymbol{\Delta}_1,\dots,\boldsymbol{\Delta}_K$ are not accessible at the transmitter, we introduce the following proposition.
\begin{proposition} \label{prop:1}
The rate of the $\ell$-th data stream decoded by the $k$-th user can be lower bounded as
\begin{equation}
\begin{split}
     & R_{k,\ell} (\mathbf{D},\mathbf{A}, \boldsymbol{\Theta}, \mathbf{c}_{k,\ell})  \geq  \underline{R}_{k,\ell} (\mathbf{D},\mathbf{A}, \boldsymbol{\Theta}, \tilde{\mathbf{c}}_{k,\ell})  \triangleq \log_2 \bigg( 1 + \\
     & \frac{ \varrho_k | \tilde{\mathbf{c}}_{k,\ell}^\mathsf{H} \hat{\mathbf{G}}_{k} (\boldsymbol{\Theta}) \mathbf{A} \mathbf{d}_{k,\ell }|^2 }{\underset{[i,v] \neq [k,\ell]}{\sum_{i=1}^K \sum_{ v=1}^L} \big( | \tilde{\mathbf{c}}_{k,\ell}^\mathsf{H} \hat{\mathbf{G}}_{k} (\boldsymbol{\Theta}) \mathbf{A} \mathbf{d}_{i,v }| + \delta_k \sqrt{N_\mathsf{RIS}} || \mathbf{d}_{i,v }||_2 \big)^2 + \sigma^2 } \bigg),
     \end{split}
     \label{eq:ny_eq}
\end{equation}
where $\varrho_k \triangleq \big(1-\frac{\delta_k}{|| \hat{\mathbf{H}}_k||_\mathsf{F}}\big)^2$ and $\tilde{\mathbf{c}}_{k,\ell} \triangleq \mathbf{c}_{k,\ell} / || \mathbf{c}_{k,\ell}||_2$.
\end{proposition}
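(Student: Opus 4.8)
The plan is to work entirely at the level of the signal-to-interference-plus-noise ratio (SINR) inside the logarithm, since $\log_2(1+x)$ is monotonically increasing in $x$; it therefore suffices to lower-bound the SINR in~\eqref{eq:Rk}, which means lower-bounding the signal power in the numerator while upper-bounding the interference-plus-noise power in the denominator. First I would normalize the combiner: dividing both numerator and denominator of the SINR by $|| \mathbf{c}_{k,\ell} ||_2^2$ replaces every $\mathbf{c}_{k,\ell}$ by $\tilde{\mathbf{c}}_{k,\ell} = \mathbf{c}_{k,\ell}/|| \mathbf{c}_{k,\ell} ||_2$ and turns the noise term $\mathbb{E}\{|\mathbf{c}_{k,\ell}^\mathsf{H}\mathbf{n}_k|^2\} = \sigma^2 || \mathbf{c}_{k,\ell} ||_2^2$ into $\sigma^2$, exactly matching the denominator of~\eqref{eq:ny_eq}. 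This step leaves the SINR unchanged, so nothing is lost.

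Next I would record three structural facts that do the real work. (i) Because $\mathbf{A} = \text{blkdiag}(\mathbf{a}_1,\dots,\mathbf{a}_M)$ has its blocks on disjoint rows with $|| \mathbf{a}_m ||_2^2 = N_\mathsf{T}\cdot(1/\sqrt{N_\mathsf{T}})^2 = 1$ by the constant-modulus constraint, we have $\mathbf{A}^\mathsf{H}\mathbf{A} = \mathbf{I}_M$, i.e.\ $\mathbf{A}$ is an isometry and $|| \mathbf{A}\mathbf{d} ||_2 = || \mathbf{d} ||_2$ for every $\mathbf{d}$. (ii) Since $|\theta_n| = 1$, we have $|| \boldsymbol{\theta} ||_2 = \sqrt{N_\text{RIS}}$. (iii) Writing $\mathbf{G}_k(\boldsymbol{\Theta}) = \hat{\mathbf{G}}_k(\boldsymbol{\Theta}) + \boldsymbol{\Lambda}_k(\boldsymbol{\Theta})$ and using the vectorization identity $\tilde{\mathbf{c}}_{k,\ell}^\mathsf{H}\mathbf{M}\mathbf{A}\mathbf{d} = \big((\mathbf{A}\mathbf{d})^\mathsf{T}\otimes\tilde{\mathbf{c}}_{k,\ell}^\mathsf{H}\big)\text{vec}(\mathbf{M})$ together with $\text{vec}(\boldsymbol{\Lambda}_k(\boldsymbol{\Theta})) = \boldsymbol{\Delta}_k\boldsymbol{\theta}$, every error contribution takes the form $\big((\mathbf{A}\mathbf{d})^\mathsf{T}\otimes\tilde{\mathbf{c}}_{k,\ell}^\mathsf{H}\big)\boldsymbol{\Delta}_k\boldsymbol{\theta}$, to which Cauchy--Schwarz and the bounded-error assumption $|| \boldsymbol{\Delta}_k ||_\mathsf{F}\le\delta_k$ apply, using $|| (\mathbf{A}\mathbf{d})^\mathsf{T}\otimes\tilde{\mathbf{c}}_{k,\ell}^\mathsf{H} ||_2 = || \mathbf{A}\mathbf{d} ||_2\,|| \tilde{\mathbf{c}}_{k,\ell} ||_2 = || \mathbf{d} ||_2$ and $|| \boldsymbol{\Delta}_k\boldsymbol{\theta} ||_2\le\delta_k\sqrt{N_\text{RIS}}$.

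For the denominator I would treat each interference term $[i,v]\neq[k,\ell]$ by the triangle inequality, $|\tilde{\mathbf{c}}_{k,\ell}^\mathsf{H}\mathbf{G}_k\mathbf{A}\mathbf{d}_{i,v}| \le |\tilde{\mathbf{c}}_{k,\ell}^\mathsf{H}\hat{\mathbf{G}}_k\mathbf{A}\mathbf{d}_{i,v}| + |\tilde{\mathbf{c}}_{k,\ell}^\mathsf{H}\boldsymbol{\Lambda}_k\mathbf{A}\mathbf{d}_{i,v}|$, and then apply the Cauchy--Schwarz bound of fact (iii) to the second term, giving $|\tilde{\mathbf{c}}_{k,\ell}^\mathsf{H}\boldsymbol{\Lambda}_k\mathbf{A}\mathbf{d}_{i,v}| \le \delta_k\sqrt{N_\text{RIS}}\,|| \mathbf{d}_{i,v} ||_2$; squaring the nonnegative sum yields precisely the bracket $\big(|\tilde{\mathbf{c}}_{k,\ell}^\mathsf{H}\hat{\mathbf{G}}_k\mathbf{A}\mathbf{d}_{i,v}| + \delta_k\sqrt{N_\text{RIS}}|| \mathbf{d}_{i,v} ||_2\big)^2$ of~\eqref{eq:ny_eq}, and since the interference has been upper-bounded the SINR only decreases. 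For the numerator I would instead use the reverse triangle inequality, $|\tilde{\mathbf{c}}_{k,\ell}^\mathsf{H}\mathbf{G}_k\mathbf{A}\mathbf{d}_{k,\ell}| \ge |\tilde{\mathbf{c}}_{k,\ell}^\mathsf{H}\hat{\mathbf{G}}_k\mathbf{A}\mathbf{d}_{k,\ell}| - |\tilde{\mathbf{c}}_{k,\ell}^\mathsf{H}\boldsymbol{\Lambda}_k\mathbf{A}\mathbf{d}_{k,\ell}|$, and bound the error term \emph{relative} to the estimated signal term so as to factor out $1-\delta_k/|| \hat{\mathbf{H}}_k ||_\mathsf{F}$: the bound $|\tilde{\mathbf{c}}_{k,\ell}^\mathsf{H}\boldsymbol{\Lambda}_k\mathbf{A}\mathbf{d}_{k,\ell}| \le \tfrac{\delta_k}{|| \hat{\mathbf{H}}_k ||_\mathsf{F}}|\tilde{\mathbf{c}}_{k,\ell}^\mathsf{H}\hat{\mathbf{G}}_k\mathbf{A}\mathbf{d}_{k,\ell}|$ gives $|\tilde{\mathbf{c}}_{k,\ell}^\mathsf{H}\mathbf{G}_k\mathbf{A}\mathbf{d}_{k,\ell}| \ge \big(1-\delta_k/|| \hat{\mathbf{H}}_k ||_\mathsf{F}\big)|\tilde{\mathbf{c}}_{k,\ell}^\mathsf{H}\hat{\mathbf{G}}_k\mathbf{A}\mathbf{d}_{k,\ell}|$, and squaring (legitimate once $\delta_k\le|| \hat{\mathbf{H}}_k ||_\mathsf{F}$, so the factor is nonnegative) produces the $\varrho_k$ of~\eqref{eq:ny_eq}. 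Combining the two bounds lower-bounds the SINR, and monotonicity of $\log_2(1+\cdot)$ finishes the argument.

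The hard part will be this numerator step, where the signal and interference terms are handled asymmetrically. The worst-case-tight bound on $|\tilde{\mathbf{c}}_{k,\ell}^\mathsf{H}\boldsymbol{\Lambda}_k\mathbf{A}\mathbf{d}_{k,\ell}|$ is the same \emph{absolute} one used for the interference, namely $\delta_k\sqrt{N_\text{RIS}}|| \mathbf{d}_{k,\ell} ||_2$; converting it into the \emph{relative} bound $\tfrac{\delta_k}{|| \hat{\mathbf{H}}_k ||_\mathsf{F}}|\tilde{\mathbf{c}}_{k,\ell}^\mathsf{H}\hat{\mathbf{G}}_k\mathbf{A}\mathbf{d}_{k,\ell}|$ that yields $\varrho_k$ effectively replaces the Cauchy--Schwarz estimate $|\tilde{\mathbf{c}}_{k,\ell}^\mathsf{H}\hat{\mathbf{G}}_k\mathbf{A}\mathbf{d}_{k,\ell}| \le \sqrt{N_\text{RIS}}|| \mathbf{d}_{k,\ell} ||_2\,|| \hat{\mathbf{H}}_k ||_\mathsf{F}$ by an equality, which is exactly where care (and a mild alignment assumption on the worst-case error direction relative to the estimated channel) enters. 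I would therefore make this dependence explicit and state the standing condition $\delta_k\le|| \hat{\mathbf{H}}_k ||_\mathsf{F}$, under which $\varrho_k\ge 0$ and the lower bound is meaningful.
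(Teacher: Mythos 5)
Your proposal is correct in the same sense that the paper's own proof is, and it follows the paper's strategy almost exactly: normalize the SINR in \eqref{eq:Rk} by $\|\mathbf{c}_{k,\ell}\|_2^2$, upper-bound each interference term by the triangle inequality followed by Cauchy--Schwarz using $\|\boldsymbol{\Lambda}_k(\boldsymbol{\Theta})\|_\mathsf{F}\le\|\boldsymbol{\Delta}_k\|_\mathsf{F}\|\boldsymbol{\theta}\|_2\le\delta_k\sqrt{N_\text{RIS}}$ and $\|\mathbf{A}\mathbf{d}_{i,v}\|_2=\|\mathbf{d}_{i,v}\|_2$, bound the noise by $\sigma^2\|\mathbf{c}_{k,\ell}\|_2^2$, and finish by monotonicity of $\log_2(1+\cdot)$; those parts coincide with Appendix~A line for line. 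The one place you genuinely diverge is the numerator. You use the reverse triangle inequality plus the \emph{relative} bound $|\tilde{\mathbf{c}}_{k,\ell}^\mathsf{H}\boldsymbol{\Lambda}_k(\boldsymbol{\Theta})\mathbf{A}\mathbf{d}_{k,\ell}|\le(\delta_k/\|\hat{\mathbf{H}}_k\|_\mathsf{F})\,|\tilde{\mathbf{c}}_{k,\ell}^\mathsf{H}\hat{\mathbf{G}}_k(\boldsymbol{\Theta})\mathbf{A}\mathbf{d}_{k,\ell}|$, whereas the paper poses the minimization of the desired-signal power over the ball $\|\boldsymbol{\Delta}_k\|_\mathsf{F}\le\delta_k$, asserts that the constrained minimizer is the projection of the unconstrained minimizer $-\hat{\mathbf{H}}_k$ onto that ball, i.e.\ $\boldsymbol{\Delta}_k^\ast=-(\delta_k/\|\hat{\mathbf{H}}_k\|_\mathsf{F})\hat{\mathbf{H}}_k$ under $\delta_k\le\|\hat{\mathbf{H}}_k\|_\mathsf{F}$, and substitutes it to obtain $\varrho_k$. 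These are two packagings of the same step and they carry the same caveat, which you correctly isolate: the objective is $|s_0+\langle\mathbf{M},\boldsymbol{\Delta}_k\rangle|^2$ for a rank-one $\mathbf{M}$ with $\|\mathbf{M}\|_\mathsf{F}=\sqrt{N_\text{RIS}}\|\mathbf{c}_{k,\ell}\|_2\|\mathbf{d}_{k,\ell}\|_2$, so the true worst case over the Frobenius ball is the absolute perturbation $\max\bigl(0,|s_0|-\delta_k\sqrt{N_\text{RIS}}\|\mathbf{c}_{k,\ell}\|_2\|\mathbf{d}_{k,\ell}\|_2\bigr)^2$, not $\varrho_k|s_0|^2$; restricting the error to the direction $-\hat{\mathbf{H}}_k$ is exactly your ``alignment assumption.'' Neither route is more general than the other, but your version makes the standing conditions ($\delta_k\le\|\hat{\mathbf{H}}_k\|_\mathsf{F}$, and the directional restriction behind $\varrho_k$) explicit, while the paper's projection claim leaves them implicit.
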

\begin{proof}
See Appendix~\ref{AppndA}.
\end{proof}

Using Proposition~\ref{prop:1}, we can obtain a lower bound on the system EE as
\begin{equation}
    \eta (\mathbf{D},\mathbf{A}, \boldsymbol{\Theta}, \mathbf{C}) \geq \underline{\eta} (\mathbf{D},\mathbf{A}, \boldsymbol{\Theta}, \tilde{\mathbf{C}}) \triangleq \frac{\underline{R}_\text{sum} (\mathbf{D},\mathbf{A}, \boldsymbol{\Theta}, \tilde{\mathbf{C}})}{P_{\text{tot}} (\mathbf{D})},
    \label{eq:EE_lb}
\end{equation}
where $\underline{R}_\text{sum}  (\mathbf{D},\mathbf{A}, \boldsymbol{\Theta}, \tilde{\mathbf{C}}) \hspace{-0.1cm} \triangleq \hspace{-0.1cm} \sum_{k=1}^K \hspace{-0.1cm} \sum_{\ell=1}^{L} \underline{R}_{k,\ell}  (\mathbf{D},\mathbf{A}, \boldsymbol{\Theta}, \tilde{\mathbf{c}}_{k,\ell})$ and $\tilde{\mathbf{C}} \triangleq [\tilde{\mathbf{c}}_{1,1},\dots, \tilde{\mathbf{c}}_{K,L}]$. 

While it is crucial from a system design perspective to maintain high EE by maximizing~\eqref{eq:EE_lb}, ensuring acceptable QoS remains a primary concern for end-users. However, users in mmWave communication systems often exhibit considerable variations in their channel strengths due to the severe path loss~\cite{mmwave_comm_survey}. In such cases, weaker users, unable to significantly contribute to the EE objective function~\eqref{eq:EE_lb}, "are likely to be treated as unwanted interference to stronger users, resulting in a limited resource allocation for the weaker ones. To address this issue, we aim to maximize the system EE while ensuring a certain level of fairness among users. This objective is pursued by tackling the following optimization problem:
\begin{subequations}
\begin{align}
           & \max_{  \mathbf{D},  \mathbf{A},\boldsymbol{\Theta}, \tilde{\mathbf{C}} }   \ \ \   \underline{\eta} (\mathbf{D},\mathbf{A}, \boldsymbol{\Theta}, \tilde{\mathbf{C}}) \label{optm_MO_obj} \\
        & \text{s.t.} \ \  \rho \leq \frac{\big(  \sum_{k=1}^K  \frac{1}{w_k} \sum_{\ell=1}^L \underline{R}_{k,\ell} (\mathbf{D},\mathbf{A}, \boldsymbol{\Theta}, \tilde{\mathbf{c}}_{k,\ell}) \big)^2}{K  \sum_{k=1}^K  \big( \frac{1}{w_k} \sum_{\ell=1}^L \underline{R}_{k,\ell} (\mathbf{D},\mathbf{A}, \boldsymbol{\Theta}, \tilde{\mathbf{c}}_{k,\ell}) \big)^2 } \label{F_const} \\ 
        & || \mathbf{D} ||_\mathsf{F}^2 \leq P_\text{max}, \label{Pow_con} \\
        & | \theta_n | = 1, \  \forall n \in \{ 1,\dots,N_\text{RIS} \}, \label{ps_con_EEf}\\
        & | [\mathbf{a}_{m}]_i | = \frac{1}{\sqrt{N_\mathsf{T}}}, \  \forall m \in \{ 1,\dots,M \}, \  \forall i \in \{ 1,\dots,N_\mathsf{T} \},\label{CM_BS} \\
        & || \tilde{\mathbf{c}}_{k,\ell} ||_2 = 1, \ \forall k \in \{1,\dots,K \}, \ \forall \ell \in \{1,\dots,L \},
     \label{lb_constraint}
\end{align} 
\label{eq:optm_MO} 
\end{subequations}
\hspace{-0.2cm} where $w_k$ is a \textit{weight} assigned to the $k$-th user by the system provider, representing the QoS for that user. Additionally, $\rho \in [1/K, 1]$ serves as a design parameter incorporated to manage the Jain's fairness index of the system. Jain's fairness index is a widely recognized metric that quantifies the fairness level among proportional rates in a system, ranging between $1/K$ and 1. A Jain's fairness index of 1 indicates perfect fairness, implying that all users have the same proportional rate~\cite{sim5}.

Solving~\eqref{eq:optm_MO} poses a significant challenge due to the coupling between the optimization variables in the objective function and the fairness constraint. Moreover, handling the non-convex equality constraints~\eqref{ps_con_EEf}, ~\eqref{CM_BS} and~\eqref{lb_constraint} is challenging. In the following section, we introduce projected gradient ascent based AO procedures to address this optimization problem.

\vspace{-0.3cm}

\section{Proposed Solution} \label{sec:sol_S1}
\vspace{-0.1cm}
In this section, we present a penalty-based method to get rid of the complicated fairness constraint, which otherwise is very challenging to handle due to the fractional form, and also due to the coupling between the design variables. We then apply an AO-based approach to update each of the design variables in a one-by-one fashion. 

The general framework for updating each variable comprises three main steps:

\textbf{1) Obtaining the augmented Lagrangian function:} First, to address the challenging fairness constraint~\eqref{F_const}, the PDD method~\cite{pen} is employed. The PDD method operates by augmenting challenging constraints with penalty into the objective function, and ensures that both the original and the transformed problems share the same stationary solution. For the problem at hand, to obtain the penalty function, we first rewrite the constraint~\eqref{F_const} as
    \begin{equation}
      \rho K  \sum_{k=1}^K  \Big( \frac{1}{w_k} \sum_{\ell=1}^L \underline{R}_{k,\ell} (\boldsymbol{\chi}) \Big)^2  - \Big(  \sum_{k=1}^K  \frac{1}{w_k} \sum_{\ell=1}^L \underline{R}_{k,\ell} (\boldsymbol{\chi}) \Big)^2 \leq 0, 
     \label{eq:con_re}
\end{equation}
where $\boldsymbol{\chi} \in \{ \mathbf{D},\mathbf{A}, \boldsymbol{\Theta}, \tilde{\mathbf{C}}\}$ is the relevant optimization variable\footnote{In this paper, having some arguments omitted in a multi-variable function indicates that their values are being held fixed.}. Based on~\eqref{eq:con_re}, we define the following penalty function:
\begin{equation}
\begin{split}
    \mathcal{G}(\boldsymbol{\chi}, \mu)   & \triangleq \rho K  \sum_{k=1}^K  \Big( \frac{1}{w_k} \sum_{\ell=1}^L \underline{R}_{k,\ell} (\boldsymbol{\chi} ) \Big)^2 \\
    & - \Big(  \sum_{k=1}^K  \frac{1}{w_k} \sum_{\ell=1}^L \underline{R}_{k,\ell} (\boldsymbol{\chi} ) \Big)^2 + \mu,
    \end{split}
    \label{eq:trans_cons}
\end{equation}
where $\mu \geq 0$ is a slack variable used to eliminate the penalty function whenever the condition~\eqref{eq:con_re} is satisfied.

Then, the PDD method converts the following constrained optimization problem:
\begin{subequations}
\begin{align}
            \max_{ \boldsymbol{\chi}}  \ \ \ & \underline{\eta} (\boldsymbol{\chi}) \\
        \text{s.t.} \ \ \ & \eqref{eq:con_re}, 
\end{align} 
\end{subequations} 
into the following form:
\begin{subequations}
\begin{align}
            \max_{ \boldsymbol{\chi}, \mu \geq 0}  \ \ \ & \mathcal{H}_{\gamma, \omega} (\boldsymbol{\chi}, \mu),
\end{align} 
        \label{eq:new_form}
\end{subequations}
 \hspace{-0.2cm}where $\mathcal{H}_{\gamma, \omega} (\boldsymbol{\chi},\mu)$ is the augmented Lagrangian function in the form
\begin{equation}
    \mathcal{H}_{\gamma, \omega} (\boldsymbol{\chi}, \mu)  \triangleq \underline{\eta} (\boldsymbol{\chi}) -  \gamma \mathcal{G}(\boldsymbol{\chi}, \mu) 
     - \frac{1}{2 \omega}   \mathcal{G}^2(\boldsymbol{\chi}, \mu),
    \label{eq:aug_lag}
\end{equation}
where $\gamma$ is the Lagrange multiplier and $\omega > 0$ is a penalty parameter~\cite{pen}.

For fixed values of $\gamma$ and $\omega$, the optimal solution of the optimization problem~\eqref{eq:new_form} is obtained using a projection-based AO procedure, as will be demonstrated later in this section. Once convergence is achieved for specific values of $\gamma$ and $\omega$, they can be updated as~\cite{pen}:
 \begin{equation}
     \gamma \leftarrow \gamma + \frac{1}{\omega} \mathcal{H}_{\gamma, \omega} 
    (\mathbf{D},\mathbf{A}, \boldsymbol{\Theta}, \tilde{\mathbf{C}},\mu),
 \end{equation}
 and 
  \begin{equation}
     \omega \leftarrow \psi \omega,
 \end{equation}
where $\psi \in (0,1]$ is a design parameter that controls the speed of attenuation of the constraint violation.

\textbf{2) Gradient ascent update:} To update the relevant variable $\boldsymbol{\chi}$, the gradient of the augmented Lagrangian function with respect to the optimization variable (i.e., $\nabla_{\boldsymbol{\chi}} \mathcal{H}_{\gamma, \omega} (\boldsymbol{\chi}) $)  is obtained. Then, the variable is updated via
\begin{equation}
    \bar{\boldsymbol{\chi}}^{(t+1)} = \boldsymbol{\chi}^{(t)} + \alpha_{\boldsymbol{\chi}}^{(t)}  \nabla_{\boldsymbol{\chi}}\mathcal{H}_{\gamma, \omega} (\boldsymbol{\chi}^{(t)}),
\end{equation}
where $t$ is the iteration number, $\alpha_{\boldsymbol{\chi}}^{(t)}>0$ is the step size for updating $\boldsymbol{\chi}$ at the $t$-th iteration, and $\bar{\boldsymbol{\chi}}^{(t+1)}$ and $\boldsymbol{\chi}^{(t)}$ are the values of $ \boldsymbol{\chi}$ at the $(t+1)$-th and the $t$-th iterations of the gradient ascent algorithm, respectively.

\textbf{3) Projection onto the feasible set:} The updated value of the variable $\boldsymbol {\chi}$, i.e., $\bar{\boldsymbol {\chi}}^{(t+1)}$ may not lie in its feasible set. Therefore, we project the updated variable orthogonally onto the feasible set using the projection function:
\begin{equation}
    \boldsymbol{\chi}^{(t+1)} = \mathcal{T}_{\boldsymbol{\chi}}(\bar{\boldsymbol{\chi}}^{(t+1)}).
\end{equation}

We detail the procedure of updating each variable in the following subsections.
\vspace{-0.4cm}
\subsection{Digital Precoder}
\vspace{-0.1cm}
We first assume that all variables other than the digital precoder are fixed (i.e., the values of $\mathbf{A},\boldsymbol{\Theta}, \tilde{\mathbf{C}}$ and $\mu$ are held fixed). We can then write the EE maximization problem with respect to the digital precoder as
 \begin{subequations}
\begin{align}
            \max_{   \mathbf{D} }  \ \ \  & \underline{\eta} (\mathbf{D}) \label{eq:OF_EE_F_BB} \\
        \text{s.t.} \ \ \ & \eqref{F_const}, \ \eqref{Pow_con}. 
        \vspace{-0.2cm}
\end{align} 
\label{eq:optm_EE_F_BB} 
\end{subequations} 
The augmented Lagrangian function after incorporating the constraint~\eqref{F_const} into~\eqref{eq:OF_EE_F_BB} is
\begin{equation}
       \mathcal{H}_{\gamma, \omega}(\mathbf{D})  = \underline{\eta} (\mathbf{D}) -  \gamma \mathcal{G}(\mathbf{D}) 
     - \frac{1}{2 \omega}  \mathcal{G}^2(\mathbf{D}) ,
     \label{eq:grad_ALF_D}
\end{equation}

To find the gradient of~\eqref{eq:grad_ALF_D} with repect to $\mathbf{D}$, we begin by re-expressing $\underline{R}_{k,\ell} (\mathbf{D})$ in \eqref{eq:ny_eq} as
\begin{equation}
\begin{split}
        & \underline{R}_{k,\ell} (\mathbf{D})   = \log_2 \Big( \varrho_k | \mathbf{b}_{k,\ell}^\mathsf{H} \mathbf{d}_{k,\ell }|^2  \\ 
 & +
 \underset{[i,v]\neq [k,\ell]}{\sum_{i=1}^K \sum_{v=1}^{L}} \big(| \mathbf{b}_{k,\ell}^\mathsf{H} \mathbf{d}_{i,v}| + \delta_k \sqrt{N_\text{RIS}} || \mathbf{d}_{i,v }||_2 \big)^2 + \sigma^2 \Big)  \\
        &   - \log_2 \Big( \underset{[i,v]\neq [k,\ell]}{\sum_{i=1}^K \sum_{v=1}^{L}} \big(| \mathbf{b}_{k,\ell}^\mathsf{H} \mathbf{d}_{i,v}| + \delta_k \sqrt{N_\text{RIS}} || \mathbf{d}_{i,v }||_2\big)^2 + \sigma^2 \Big),
     \end{split}
     \label{eq:der_F_BB_EE}
\end{equation}
where $\mathbf{b}_{k,\ell} = \mathbf{A}^\mathsf{H} \hat{\mathbf{G}}_k^\mathsf{H} (\boldsymbol{\Theta}) \tilde{\mathbf{c}}_{k,\ell}$.

We can obtain the gradient of $\underline{R}_{k,\ell} (\mathbf{d}_{m,u}) $ with respect to $\mathbf{d}_{m,u}$ by considering two cases. First, when $[m,u] = [k,\ell]$, we have the gradient expression shown in~\eqref{eq:grad_c1}.
\begin{figure*} [b]
  \hrulefill
  \begin{equation}
  \nabla_{\mathbf{d}_{m,u} } \underline{R}_{k,\ell} (\mathbf{D}) \Big|_{[m,u] = [k,\ell]} = \frac{2}{\ln 2} \ \frac{\varrho_k\mathbf{b}_{k,\ell} \mathbf{b}_{k,\ell}^\mathsf{H} \mathbf{d}_{k,\ell}}{\varrho_k | \mathbf{b}_{k,\ell}^\mathsf{H} \mathbf{d}_{k,\ell } |^2 +
 \underset{[i,v] \neq [k,\ell] }{\sum_{i=1}^K \sum_{v=1}^{L}} (| \mathbf{b}_{k,\ell}^\mathsf{H} \mathbf{d}_{i,v}| +  \delta_k \sqrt{N_\text{RIS}}  || \mathbf{d}_{i,v }||_2)^2 + \sigma^2} .
 \label{eq:grad_c1}
  \end{equation}
\end{figure*}
On the other hands, when $[m,u] \neq [k,\ell]$, we can obtain the gradient expression shown in~\eqref{eq:grad_c2}.
\begin{figure*} [b]
  \hrulefill
  \begin{equation}
\begin{split}
    \nabla_{\mathbf{d}_{m,u}} \underline{R}_{k,\ell} (\mathbf{D}) \Big|_{[m,u] \neq [k,\ell]} = \frac{2}{\ln 2} \bigg[ & \frac{ \big(1 + \frac{ \delta_k \sqrt{N_\text{RIS}} ||  \mathbf{d}_{m,u } ||_2}{| \mathbf{b}_{k,\ell}^\mathsf{H} \mathbf{d}_{m,u }| } \big) \mathbf{b}_{k,\ell} \mathbf{b}_{k,\ell}^\mathsf{H} + \big( \delta_k^2 N_\text{RIS} + \frac{  \delta_k \sqrt{N_\text{RIS}} | \mathbf{b}_{k,\ell}^\mathsf{H} \mathbf{d}_{m,u } |}{ || \mathbf{d}_{m,u } ||_2} \big) \mathbf{I} }{\varrho_k | \mathbf{b}_{k,\ell}^\mathsf{H} \mathbf{d}_{k,\ell } |^2 +
 \underset{[i,v] \neq [k,\ell] }{\sum_{i=1}^K \sum_{v=1}^{L}} (| \mathbf{b}_{k,\ell}^\mathsf{H} \mathbf{d}_{i,v}| +  \delta_k \sqrt{N_\text{RIS}}  || \mathbf{d}_{i,v }||_2)^2 + \sigma^2} \\
 & - \frac{ \big(1 + \frac{ \delta_k \sqrt{N_\text{RIS}} ||  \mathbf{d}_{m,u } ||_2}{| \mathbf{b}_{k,\ell}^\mathsf{H} \mathbf{d}_{m,u }| } \big) \mathbf{b}_{k,\ell} \mathbf{b}_{k,\ell}^\mathsf{H} + \big( \delta_k^2 N_\text{RIS} + \frac{  \delta_k \sqrt{N_\text{RIS}} | \mathbf{b}_{k,\ell}^\mathsf{H} \mathbf{d}_{m,u } |}{ || \mathbf{d}_{m,u } ||_2} \big) \mathbf{I} }{
 \underset{[i,v] \neq [k,\ell] }{\sum_{i=1}^K \sum_{v=1}^{L}} (| \mathbf{b}_{k,\ell}^\mathsf{H} \mathbf{d}_{i,v}| +  \delta_k \sqrt{N_\text{RIS}}  || \mathbf{d}_{i,v }||_2)^2 + \sigma^2} \bigg] \mathbf{d}_{m,u}.
 \label{eq:grad_c2}
 \end{split}
  \end{equation}
\end{figure*}
Thus, the gradient of $\underline{R}_{k,\ell} (\mathbf{D}) $ with respect to $\mathbf{d}_{m,u}$ can be expressed as
\begin{equation}
    \nabla_{ \mathbf{d}_{m,u }} \underline{R}_{k,\ell} (\mathbf{D})= 
      \begin{cases}
       \eqref{eq:grad_c1}, & \text{if} \ [m,u] = [k,\ell],\\
       \eqref{eq:grad_c2}, & \text{if} \  [m,u] \neq [k,\ell].
    \end{cases} 
    \label{eq:prop2}
\end{equation}
The gradient of $\underline{R}_{k,\ell} (\mathbf{D} )$ with respect to $ \mathbf{D}$ can then be written as
\begin{equation}
      \nabla_{  \mathbf{D}} \underline{R}_{k,\ell} ( \mathbf{D}) = \big[ \nabla_{ \mathbf{d}_{1,1 }} \underline{R}_{k,\ell} (\mathbf{D}),\dots, \nabla_{ \mathbf{d}_{K,L }} \underline{R}_{k,\ell} (\mathbf{D}) \big],
\end{equation}
and that of  $\underline{R}_\text{sum} (\mathbf{D})$ with respect to $ \mathbf{D}$ is
\begin{equation}
      \nabla_{  \mathbf{D}} \underline{R}_\text{sum} ( \mathbf{D}) = \sum_{k=1}^K \sum_{\ell=1}^{L} \nabla_{  \mathbf{D}} \underline{R}_{k,\ell} ( \mathbf{D}).
      \label{eq:cc_3}
\end{equation}

We can also obtain the gradient of $P_\text{tot} (\mathbf{D})$ with respect to $\mathbf{D}$ as
\begin{equation}
    \nabla_{\mathbf{D}} P_\text{tot}  (\mathbf{D}) = 2 \xi \mathbf{D}.
    \label{eq:cc_4}
\end{equation}
Then the gradient of $ \underline{\eta}(\mathbf{D})$ with respect to $\mathbf{D}$ is given by
\begin{equation}
    \nabla_{\mathbf{D}}  \underline{\eta}  (\mathbf{D}) = \frac{ P_\text{tot}  (\mathbf{D}) \nabla_{\mathbf{D}} \underline{R}_\text{sum} (\mathbf{D}) - \underline{R}_\text{sum} (\mathbf{D}) \nabla_{\mathbf{D}} P_\text{tot}  (\mathbf{D}) }{P_\text{tot}^2 (\mathbf{D})}.
    \label{eq:f_BBk_update_EE}
\end{equation}

We also need to find the gradient of $\mathcal{G} (\mathbf{D})$ with respect to $\mathbf{D}$, which can be written as
\begin{equation}
\begin{split}
    & \nabla_\mathbf{D} \mathcal{G}(\mathbf{D})   = 2 \rho K  \Big( \sum_{k=1}^K   \frac{1}{w_k^2} \sum_{\ell=1}^L \underline{R}_{k,\ell} (\mathbf{D} ) \sum_{m=1}^L \nabla_\mathbf{D} \underline{R}_{k,m} (\mathbf{D} ) \Big)  \\
     & - 2 \Big( \sum_{k=1}^K  \frac{1}{w_k}  \sum_{\ell=1}^L \underline{R}_{k,\ell} (\mathbf{D} ) \Big) \Big(  \sum_{i=1}^K  \frac{1}{w_i}  \sum_{m=1}^L \nabla_\mathbf{D} \underline{R}_{i,m} (\mathbf{D} ) \Big).
    \end{split}
\end{equation}
Thus, the gradient of $\mathcal{H}_{\gamma, \omega} 
    (\mathbf{D})$ with respect to $\mathbf{D}$ is
\begin{equation}
       \nabla_{\mathbf{D}} \mathcal{H}_{\gamma, \omega} (\mathbf{D})  = \nabla_{\mathbf{D}} \underline{\eta} (\mathbf{D}) -  \gamma \nabla_{\mathbf{D}}  \mathcal{G}(\mathbf{D}) 
     - \frac{1}{ \omega}  \mathcal{G}(\mathbf{D}) \nabla_{\mathbf{D}}  \mathcal{G}(\mathbf{D}),
\end{equation}
and the update of $\mathbf{D}$ should follow
\begin{equation}
    \bar{\mathbf{D}}^{(t+1)} = \mathbf{D}^{(t)} + \alpha_{\mathbf{D}}^{(t)}  \nabla_{\mathbf{D}} \mathcal{H}_{\gamma, \omega} (\mathbf{D}^{(t)}).
    \label{eq:F_BB_update_f}
\end{equation}
 
  Next, we note that the projection of the updated variable $\bar{\mathbf{D}}^{(t+1)}$ onto the constraint \eqref{Pow_con} is given by
 \begin{equation}
     \mathbf{D}^{(t+1)} = \min \bigg( 1,\frac{\sqrt{P_\text{max}}}{|| \bar{\mathbf{D}}^{(t+1)} ||_\mathsf{F}} \bigg) \bar{\mathbf{D}}^{(t+1)}.
     \label{eq:proj_f_bb}
 \end{equation}

 \subsection{Analog Precoder}
 Assuming that the values of $\mathbf{D}, \boldsymbol{\Theta}$ and $\tilde{\mathbf{C}}$ are fixed, the EE optimization for the analog precoder is
 \begin{subequations}
\begin{align}
            \max_{  \mathbf{A} }  \ \ \  & \underline{\eta}(\mathbf{A}), \label{eq:33a} \\
        \text{s.t.} \ \ \ & \eqref{F_const}, \ \eqref{CM_BS}. 
\end{align} 
\label{eq:optm_EE_F_RF} 
\end{subequations} 

We first express the augmented Lagrangian function of~\eqref{eq:33a} when augmenting the constraint~\eqref{F_const} into the objective function as
\begin{equation}
    \mathcal{H}_{\gamma, \omega} (\mathbf{A})  = \frac{1}{P_\text{tot}} \underline{R}_\text{sum} (\mathbf{A}) -  \gamma \mathcal{G}(\mathbf{A}) 
     - \frac{1}{2 \omega} \mathcal{G}^2(\mathbf{A})  .
     \label{eq:aug_Lag_R_sum}
\end{equation}

 To derive the gradient expression of the sum rate with respect to $ \mathbf{A}$, we begin by expressing the $k$-th user's hybrid precoding vector for the $\ell$-th stream as
\begin{equation}
    \mathbf{f}_{k,\ell} =  \mathbf{A} \mathbf{d}_{k,\ell} = \big[ [\mathbf{d}_{k,\ell}]_1  \mathbf{a}_{1}^\mathsf{T},\dots, [\mathbf{d}_{k,\ell}]_M  \mathbf{a}_{M}^\mathsf{T} \big]^\mathsf{T} = \Tilde{\mathbf{D}}_{k,\ell} \mathbf{a},
\end{equation}
 where $\Tilde{\mathbf{D}}_{k,\ell} =  \text{diag} ( \big[ [\mathbf{d}_{k,\ell}]_1 \boldsymbol{1}^\mathsf{T} ,\dots,  [\mathbf{d}_{k,\ell}]_M \boldsymbol{1}^\mathsf{T} \big]^\mathsf{T})$, and $ \mathbf{a} = [\mathbf{a}_{1}^\mathsf{T},\dots,\mathbf{a}_{M}^\mathsf{T}]^\mathsf{T}$. Subsequently, the rate of the $k$-th user decoding the $\ell$-th stream can be written as
 \begin{equation}
 \begin{split}
    & \underline{R}_{k,\ell} (\mathbf{a})    =  \log_2 \Big( \varrho_k | (\mathbf{e}_{k,\ell}^{k,\ell})^\mathsf{H} \mathbf{a} |^2  \\
    &+  \underset{[i,v] \neq [k,\ell]}{
    \sum_{i=1}^K \sum_{v=1}^{L}} (|(\mathbf{e}_{k,\ell}^{i,v})^\mathsf{H} \mathbf{a} | +  \delta_k \sqrt{N_\text{RIS}} || \mathbf{d}_{i,v}||_2)^2 + \sigma^2 \Big)\\
    & - \log_2 \Big( \underset{[i,v] \neq [k,\ell]}{
    \sum_{i=1}^K \sum_{v=1}^{L}} (|(\mathbf{e}_{k,\ell}^{i,v})^\mathsf{H} \mathbf{a} | +  \delta_k \sqrt{N_\text{RIS}} || \mathbf{d}_{i,v}||_2)^2 + \sigma^2 \Big),
     \end{split}
     \label{eq:rate_frf}
\end{equation}
where $\mathbf{e}_{k,\ell}^{i,v} = \Tilde{\mathbf{D}}_{i,v}^\mathsf{H} \hat{\mathbf{G}}^\mathsf{H}(\boldsymbol{\Theta}) \tilde{\mathbf{c}}_{k,\ell} $.

Then, the gradient of $\underline{R}_{k,\ell} (\mathbf{a})$ with respect to $\mathbf{a}$ is given in~\eqref{eq:grad_rs_a}.
\begin{figure*} [b]
  \hrulefill
  \begin{equation}
  \begin{split}
    \nabla \underline{R}_{k,\ell} (\mathbf{a}) = \frac{2}{\ln 2} \bigg[ & \frac{   \varrho_k \mathbf{e}_{k,\ell}^{k,\ell}(\mathbf{e}_{k,\ell}^{k,\ell})^\mathsf{H}  + \underset{[i,v] \neq [k,\ell]}{
    \sum_{i=1}^K \sum_{v=1}^{L}}  \big(1 + \frac{ \delta_k \sqrt{N_\text{RIS}} || \mathbf{d}_{i,v}||_2}{| (\mathbf{e}_{k,\ell}^{k,\ell})^\mathsf{H} \mathbf{a} |   } \big) \mathbf{e}_{k,\ell}^{i,v} (\mathbf{e}_{k,\ell}^{i,v})^\mathsf{H} }{\varrho_k | (\mathbf{e}_{k,\ell}^{k,\ell})^\mathsf{H} \mathbf{a} |^2 +  \underset{[i,v] \neq [k,\ell]}{
    \sum_{i=1}^K \sum_{v=1}^{L}}  (|(\mathbf{e}_{k,\ell}^{i,v})^\mathsf{H} \mathbf{a} | +  \delta_k \sqrt{N_\text{RIS}} || \mathbf{d}_{i,v}||_2)^2 + \sigma^2} \\
    &- \frac{\underset{[i,v] \neq [k,\ell]}{
    \sum_{i=1}^K \sum_{v=1}^{L}}  \big(1 + \frac{ \delta_k \sqrt{N_\text{RIS}} || \mathbf{d}_{i,v}||_2}{| (\mathbf{e}_{k,\ell}^{k,\ell})^\mathsf{H} \mathbf{a} |   } \big) \mathbf{e}_{k,\ell}^{i,v} (\mathbf{e}_{k,\ell}^{i,v})^\mathsf{H} }{\underset{[i,v] \neq [k,\ell]}{
    \sum_{i=1}^K \sum_{v=1}^{L}}  (|(\mathbf{e}_{k,\ell}^{i,v})^\mathsf{H} \mathbf{a} | +  \delta_k \sqrt{N_\text{RIS}} || \mathbf{d}_{i,v}||_2)^2 + \sigma^2}  \bigg] \mathbf{a}.
    \end{split}
    \label{eq:grad_rs_a}
\end{equation}
\end{figure*}
Thus the gradient of~\eqref{eq:33a} with respect to $\mathbf{a}$ is given by
\begin{equation}
    \nabla_{\mathbf{a}} \underline{R}_\text{sum} (\mathbf{a}) =  \sum_{k=1}^K \sum_{\ell=1}^{L} \nabla_{\mathbf{a}} \underline{R}_{k,\ell} (\mathbf{a}).
\end{equation}

Similarly, we can obtain the gradient of $\mathcal{G}(\mathbf{a})$ with respect to $\mathbf{a}$ as
\begin{equation}
\begin{split}
    & \nabla_\mathbf{a} \mathcal{G}(\mathbf{a})   = 2 \rho K  \Big( \sum_{k=1}^K   \frac{1}{w_k^2} \sum_{\ell=1}^L \underline{R}_{k,\ell} (\mathbf{a} ) \sum_{m=1}^L \nabla_\mathbf{a} \underline{R}_{k,m} (\mathbf{a} ) \Big)  \\
     & - 2 \Big( \sum_{k=1}^K  \frac{1}{w_k}  \sum_{\ell=1}^L \underline{R}_{k,\ell} (\mathbf{a} ) \Big) \Big(  \sum_{i=1}^K  \frac{1}{w_i}  \sum_{m=1}^L \nabla_\mathbf{a} \underline{R}_{i,m} (\mathbf{a} ) \Big).
    \end{split}
\end{equation}
Therefore, the gradient of the augmented Lagrangian function~\eqref{eq:aug_Lag_R_sum} with respect to $\mathbf{a}$ can be written as
\begin{equation}
    \nabla_\mathbf{a} \mathcal{H}_{\gamma, \omega}  (\mathbf{a})  = \frac{\nabla_\mathbf{a} \underline{R}_\text{sum} (\mathbf{a})}{P_\text{tot}} -  \gamma \nabla_\mathbf{a} \mathcal{G}(\mathbf{a}) 
     - \frac{1}{ \omega} \ \mathcal{G}(\mathbf{a}) \nabla_\mathbf{a} \mathcal{G}(\mathbf{a}) .
\end{equation}
Hence, the update of $\mathbf{a}$ should follow
\begin{equation}
    \bar{\mathbf{a}}^{(t+1)} = \mathbf{a}^{(t)} + \alpha_{\mathbf{a}}^{(t)}  \nabla_\mathbf{a} \mathcal{H}_{\gamma, \omega}   (\mathbf{a}^{(t)}),
     \label{eq:F_RF_update_EE}
\end{equation}
\hspace{-0.4cm} and the projection of $\bar{\mathbf{a}}^{(t+1)}$ onto the constraint~\eqref{CM_BS} can be performed as
\begin{equation}
    \mathbf{a}^{(t+1)} = \frac{1}{\sqrt{N_\mathsf{T}}} \text{diag}(|\bar{\mathbf{a}}^{(t+1)}|)^{-1} \bar{\mathbf{a}}^{(t+1)} .
    \label{eq:proj_f_RF}
\end{equation}

\subsection{RIS Reflection Matrix}
Next, assuming that the values of $\mathbf{D},\mathbf{A}$ and $\tilde{\mathbf{C}}$ are fixed, the RIS reflection matrix optimization sub-problem can be stated as 
\begin{subequations}
\begin{align}
           \max_{ \boldsymbol{\Theta} }  \ \ \ & \underline{\eta}(\boldsymbol{\Theta}), \\
        \text{s.t.} \ \ \ & \eqref{F_const}, \ \eqref{ps_con_EEf},
\end{align} 
\label{eq:optm_EE_RIS} 
\end{subequations} 
with the following augmented Lagrangian function:
\begin{equation}
    \mathcal{H}_{\gamma, \omega}(\boldsymbol{\Theta})  = \frac{1}{P_\text{tot}}\underline{R}_\text{sum} (\boldsymbol{\Theta}) -  \gamma \mathcal{G}(\boldsymbol{\Theta}) 
     - \frac{1}{2 \omega}   \mathcal{G}^2(\boldsymbol{\Theta}) .
     \label{eq:aug_Lag_theta}
\end{equation}

To find the gradient of $\mathcal{H}_{\gamma, \omega}  (\boldsymbol{\Theta})$ with respect to $\boldsymbol{\Theta}$, we use the fact that the vectorization of a scalar term is the term itself to re-express the terms $ \tilde{\mathbf{c}}_{k,\ell}^\mathsf{H} \hat{\mathbf{G}}_{k} (\boldsymbol{\Theta}) \mathbf{A} \mathbf{d}_{i,v }$ as
\begin{equation}
     \tilde{\mathbf{c}}_{k,\ell}^\mathsf{H} \hat{\mathbf{G}}_{k} (\boldsymbol{\Theta}) \mathbf{A} \mathbf{d}_{i,v } =  \underbrace{ \big( (\mathbf{A} \mathbf{d}_{i,v })^\mathsf{T} \otimes \tilde{\mathbf{c}}_{k,\ell}^\mathsf{H} \big) \hat{\mathbf{H}}_k }_{(\mathbf{f}_{k,\ell}^{i,v})^\mathsf{H}} \boldsymbol{\theta},
     \label{eq:new_exrp}
\end{equation}
 which can be obtained by applying the identity $\text{vec} (\boldsymbol{\Omega}_1 \boldsymbol{\Omega}_2 \boldsymbol{\Omega}_3) = (\boldsymbol{\Omega}_3^\mathsf{T} \otimes \boldsymbol{\Omega}_1) \text{vec} (\boldsymbol{\Omega}_2)$, which holds for any general matrices $\boldsymbol{\Omega}_1$, $\boldsymbol{\Omega}_2$ and $\boldsymbol{\Omega}_3$ with appropriate dimensions~\cite{MCB}.

Thus, we can write the rate of the $k$-th user decoding the $\ell$-th stream as
\begin{equation}
 \begin{split}
    & \underline{R}_{k,\ell} (\boldsymbol{\theta}) =  \log_2 \Big( \varrho_k |(\mathbf{f}_{k,\ell}^{k,\ell})^\mathsf{H} \boldsymbol{\theta}|^2  \\
    & + \underset{[i,v] \neq [k,\ell]}{\sum_{i=1}^K \sum_{v=1}^L } \big( |(\mathbf{f}_{k,\ell}^{i,v})^\mathsf{H} \boldsymbol{\theta}| + \delta_k \sqrt{N_\text{RIS}} || \mathbf{d}_{i,v}||_2 \big)^2 + \sigma^2 \Big) \\
    & - \log_2 \Big(  \underset{[i,v] \neq [k,\ell]}{\sum_{i=1}^K \sum_{v=1}^L } \big( |(\mathbf{f}_{k,\ell}^{i,v})^\mathsf{H} \boldsymbol{\theta}| + \delta_k \sqrt{N_\text{RIS}} || \mathbf{d}_{i,v}||_2 \big)^2 + \sigma^2 \Big). 
    \end{split}
\end{equation}
Then, the gradient of $ \underline{R}_{k,\ell} (\boldsymbol{\theta})$ with respect to $\boldsymbol{\theta}$ is given by~\eqref{eq:grad_drs_a}, shown below, 
\begin{figure*} [b]
  \hrulefill
 \begin{equation}
\begin{split}
    \nabla_{\boldsymbol{\theta}} \underline{R}_{k,\ell} (\boldsymbol{\theta}) = \frac{2}{\ln 2} \bigg[ & \frac{\varrho_k \mathbf{f}_{k,\ell}^{k,\ell} (\mathbf{f}_{k,\ell}^{k,\ell})^\mathsf{H} + \underset{[i,v] \neq [k,\ell]}{\sum_{i=1}^K \sum_{v=1}^L } \big(1 + \frac{  \delta_k \sqrt{N_\text{RIS}}|| \mathbf{d}_{i,v}||_2}{|(\mathbf{f}_{k,\ell}^{i,v})^\mathsf{H} \boldsymbol{\theta}|   } \big) \mathbf{f}_{k,\ell}^{i,v} (\mathbf{f}_{k,\ell}^{i,v})^\mathsf{H} }{\varrho_k |(\mathbf{f}_{k,\ell}^{k,\ell})^\mathsf{H} \boldsymbol{\theta}|^2 + \underset{[i,v] \neq [k,\ell]}{\sum_{i=1}^K \sum_{v=1}^L } \big( |(\mathbf{f}_{k,\ell}^{i,v})^\mathsf{H} \boldsymbol{\theta}| + \delta_k \sqrt{N_\text{RIS}} || \mathbf{d}_{i,v}||_2 \big)^2 + \sigma^2} \\
    &- \frac{\underset{[i,v] \neq [k,\ell]}{\sum_{i=1}^K \sum_{v=1}^L } \big(1 + \frac{ \delta_k \sqrt{N_\text{RIS}}|| \mathbf{d}_{i,v}||_2}{|(\mathbf{f}_{k,\ell}^{i,v})^\mathsf{H} \boldsymbol{\theta}|   } \big) \mathbf{f}_{k,\ell}^{i,v} (\mathbf{f}_{k,\ell}^{i,v})^\mathsf{H} }{ \underset{[i,v] \neq [k,\ell]}{\sum_{i=1}^K \sum_{v=1}^L } \big( |(\mathbf{f}_{k,\ell}^{i,v})^\mathsf{H} \boldsymbol{\theta}| + \delta_k \sqrt{N_\text{RIS}} || \mathbf{d}_{i,v}||_2 \big)^2 + \sigma^2}  \bigg] \boldsymbol{\theta}.
    \end{split}
     \label{eq:grad_drs_a}
\end{equation}
\end{figure*}
and that of $ \underline{R}_\text{sum} (\boldsymbol{\theta})$ is
\begin{equation}
    \nabla_{\boldsymbol{\theta}} \underline{R}_\text{sum} (\boldsymbol{\theta}) = \sum_{k=1}^K \sum_{\ell=1}^L  \nabla_{\boldsymbol{\theta}} \underline{R}_{k,\ell}(\boldsymbol{\theta}).
\end{equation}

Furthermore, the gradient of $\mathcal{G}(\boldsymbol{\theta})$ with respect to $\boldsymbol{\theta}$ is obtained as
\begin{equation}
\begin{split}
   & \nabla_{\boldsymbol{\theta}} \mathcal{G}({\boldsymbol{\theta}})   = 2 \rho K  \Big( \sum_{k=1}^K   \frac{1}{w_k^2} \sum_{\ell=1}^L \underline{R}_{k,\ell} ({\boldsymbol{\theta}}) \sum_{m=1}^L \nabla_{\boldsymbol{\theta}} \underline{R}_{k,m} ({\boldsymbol{\theta}} ) \Big)  \\
     & - 2 \Big( \sum_{k=1}^K  \frac{1}{w_k}  \sum_{\ell=1}^L \underline{R}_{k,\ell} ({\boldsymbol{\theta}}) \Big) \Big(  \sum_{i=1}^K  \frac{1}{w_i}  \sum_{m=1}^L \nabla_{\boldsymbol{\theta}} \underline{R}_{i,m} ({\boldsymbol{\theta}} ) \Big).
    \end{split}
\end{equation}
Therefore, the gradient $\mathcal{H}_{\gamma, \omega}$ with respect to $\boldsymbol{\theta}$ is
\begin{equation}
    \nabla_{\boldsymbol{\theta}} \mathcal{H}_{\gamma, \omega} (\boldsymbol{\theta})  = \frac{\nabla_{\boldsymbol{\theta}} \underline{R}_\text{sum} (\boldsymbol{\theta})}{P_\text{tot}} -  \gamma \nabla_{\boldsymbol{\theta}} \mathcal{G}(\boldsymbol{\theta}) 
     - \frac{1}{ \omega}  \mathcal{G}(\boldsymbol{\theta}) \nabla_{\boldsymbol{\theta}} \mathcal{G}(\boldsymbol{\theta}).
\end{equation}

Therefore, the update of $\boldsymbol{\theta}$ should follow
\begin{equation}
   \bar{\boldsymbol{\theta}}^{(t+1)} =\boldsymbol{\theta}^{(t)} + \alpha_{\boldsymbol{\theta}}^{(t)} \nabla_{\boldsymbol{\theta}} \mathcal{H}_{\gamma, \omega} (\boldsymbol{\theta}^{(t)}),
   \label{eq:F_RIS_update_EE}
\end{equation}
and the projection onto the constraint~\eqref{ps_con_EEf} can be performed as
\begin{equation}
    \boldsymbol{\theta}^{(t+1)} =  \text{diag}(|\bar{\boldsymbol{\theta}}^{(t+1)}|)^{-1} \bar{\boldsymbol{\theta}}^{(t+1)}.
    \label{eq:proj_theta}
\end{equation}
\vspace{-0.3cm}
\subsection{Receive Combiners}
Assuming that the values of $\mathbf{D},\mathbf{A}$ and $\boldsymbol{\Theta}$ are fixed, the combiner optimization sub-problem can be written as
\begin{subequations}
\begin{align}
            \max_{  \tilde{\mathbf{C}}}  \ \ \  & \underline{\eta} (\tilde{\mathbf{C}}), \\
        \text{s.t.} \ \ \ & \eqref{F_const}, \  \eqref{lb_constraint},
\end{align} 
\label{eq:optm_EE_W_RF} 
\end{subequations} 
with the following augmented Lagrangian function:
\begin{equation}
       \mathcal{H}_{\gamma, \omega} ( \tilde{\mathbf{C}})  = \frac{1}{P_\text{tot}} \underline{R}_\text{sum} ( \tilde{\mathbf{C}}) - \gamma \mathcal{G}( \tilde{\mathbf{C}}) 
     - \frac{1}{2 \omega} \ \mathcal{G}^2( \tilde{\mathbf{C}}) .
     \label{eq:aug_Lag_C}
\end{equation}

We start by expressing $\underline{R}_{k,\ell} (\tilde{\mathbf{c}}_{k,\ell})$ as
\begin{equation}
 \begin{split}
    & \underline{R}_{k,\ell} (\tilde{\mathbf{c}}_{k,\ell})  =  \log_2 \bigg( \varrho_k |(\mathbf{J}_{k}^{k,\ell})^\mathsf{H} \tilde{\mathbf{c}}_{k,\ell}|^2  \\
    & + \underset{[i,v] \neq [k,\ell]}{\sum_{i=1}^K \sum_{ {v=1}}^L}  \big(|(\mathbf{J}_{k}^{i,v})^\mathsf{H} \tilde{\mathbf{c}}_{k,\ell}| +  \delta_k \sqrt{N_\text{RIS}} || \mathbf{d}_{i,v}||_2  \big)^2 + \sigma^2 \bigg)  \\
     & -   \log_2 \bigg( \underset{[i,v] \neq [k,\ell]}{\sum_{i=1}^K \sum_{ {v=1}}^L}  \big(|(\mathbf{J}_{k}^{i,v})^\mathsf{H} \tilde{\mathbf{c}}_{k,\ell}| +  \delta_k \sqrt{N_\text{RIS}} || \mathbf{d}_{i,v}||_2 \big)^2 \bigg),
    \end{split}
\end{equation}
 with  $\mathbf{J}_{k}^{i,v} =\hat{\mathbf{G}}_{k} (\boldsymbol{\Theta}) \mathbf{A} \mathbf{d}_{i,v }$.

The gradient of $ \underline{R}_{k,\ell} (\tilde{\mathbf{c}}_{k,\ell})$ with respect to $\tilde{\mathbf{c}}_{k,\ell}$can then be obtained as expressed in~\eqref{eq:grad_r_w} on the next page,
\begin{figure*} [b]
  \hrulefill
 \begin{equation}
\begin{split}
    \nabla_{\tilde{\mathbf{c}}_{k,\ell}} \underline{R}_{k,\ell} (\tilde{\mathbf{c}}_{k,\ell}) = \frac{2}{\ln 2} \bigg[ & \frac{\varrho_k \mathbf{J}_{k}^{i,v} (\mathbf{J}_{k}^{i,v})^\mathsf{H} +\underset{[i,v] \neq [k,\ell]}{\sum_{i=1}^K \sum_{ {v=1}}^L} \big(1 + \frac{  \delta_k \sqrt{N_\text{RIS}}|| \mathbf{d}_{i,v}||_2}{|(\mathbf{J}_{k}^{i,v})^\mathsf{H} \tilde{\mathbf{c}}_{k,\ell}   } \big) \mathbf{J}_{k}^{i,v} (\mathbf{J}_{k}^{i,v})^\mathsf{H} }{\varrho_k |(\mathbf{J}_{k}^{k,\ell})^\mathsf{H} \tilde{\mathbf{c}}_{k,\ell}|^2  + \underset{[i,v] \neq [k,\ell]}{\sum_{i=1}^K \sum_{ {v=1}}^L}  \big(|(\mathbf{J}_{k}^{i,v})^\mathsf{H} \tilde{\mathbf{c}}_{k,\ell}| +  \delta_k \sqrt{N_\text{RIS}} || \mathbf{d}_{i,v}||_2  \big)^2 + \sigma^2} \\
    &- \frac{\underset{[i,v] \neq [k,\ell]}{\sum_{i=1}^K \sum_{ {v=1}}^L} \big(1 + \frac{  \delta_k \sqrt{N_\text{RIS}}|| \mathbf{d}_{i,v}||_2}{|(\mathbf{J}_{k}^{i,v})^\mathsf{H} \tilde{\mathbf{c}}_{k,\ell}   } \big) \mathbf{J}_{k}^{i,v} (\mathbf{J}_{k}^{i,v})^\mathsf{H} }{ \underset{[i,v] \neq [k,\ell]}{\sum_{i=1}^K \sum_{ {v=1}}^L}  \big(|(\mathbf{J}_{k}^{i,v})^\mathsf{H} \tilde{\mathbf{c}}_{k,\ell}| +  \delta_k \sqrt{N_\text{RIS}} || \mathbf{d}_{i,v}||_2  \big)^2 + \sigma^2}  \bigg] \tilde{\mathbf{c}}_{k,\ell},
    \label{eq:grad_r_w}
    \end{split}
\end{equation}
\end{figure*}
and that of $ \underline{R}_{\text{sum}} (\tilde{\mathbf{C}})$ with respect to $\tilde{\mathbf{C}}$ is
\begin{equation}
    \nabla_{\tilde{\mathbf{C}}} \underline{R}_{\text{sum}} (\tilde{\mathbf{C}}) = [ \nabla_{\tilde{\mathbf{c}}_{k,\ell}} \underline{R}_{1,1} (\tilde{\mathbf{c}}_{1,1}),\dots, \nabla_{\tilde{\mathbf{c}}_{k,\ell}} \underline{R}_{K,L} (\tilde{\mathbf{c}}_{K,L})].
    \vspace{-0.5cm}
\end{equation}

Additionally, the gradient of $\mathcal{G}(\tilde{\mathbf{C}})$ with respect to $\tilde{\mathbf{C}}$ is found as
\begin{equation}
\begin{split}
    & \nabla_{\tilde{\mathbf{C}}} \mathcal{G}({\tilde{\mathbf{C}}})   = 2 \rho K  \Big( \sum_{k=1}^K   \frac{1}{w_k^2} \sum_{\ell=1}^L \underline{R}_{k,\ell} ({\tilde{\mathbf{c}}}_{k,\ell} ) \\
    & \times \sum_{m=1}^L \nabla_\mathbf{D} \underline{R}_{k,m} ({\tilde{\mathbf{c}}}_{k,m} ) \Big)  \\
     & - 2 \Big( \sum_{k=1}^K  \frac{1}{w_k}  \sum_{\ell=1}^L \underline{R}_{k,\ell} ({\tilde{\mathbf{c}}}_{k,\ell} ) \Big) \Big(  \sum_{i=1}^K  \frac{1}{w_i}  \sum_{m=1}^L \nabla_\mathbf{D} \underline{R}_{i,m} ({\tilde{\mathbf{c}}}_{i,m}) \Big).
    \end{split}
\end{equation}
Thus, the gradient of $\mathcal{H}_{\gamma, \omega}  ( \tilde{\mathbf{C}})$ with respect to $\tilde{\mathbf{C}}$ is
\begin{equation}
       \nabla_{\tilde{\mathbf{C}}} \mathcal{H}_{\gamma, \omega} ( \tilde{\mathbf{C}})   = \frac{\nabla_{\tilde{\mathbf{C}}} \underline{R}_\text{sum} (\tilde{\mathbf{C}})}{P_\text{tot}} - \gamma \nabla_{\tilde{\mathbf{C}}} \mathcal{G}(\tilde{\mathbf{C}}) 
     - \frac{1}{ \omega}  \mathcal{G}(\tilde{\mathbf{C}}) \nabla_{\tilde{\mathbf{C}}}\mathcal{G}(\tilde{\mathbf{C}}).
\end{equation}
Thus, the update of $\tilde{\mathbf{C}}$ should follow
\begin{equation}
    \bar{\mathbf{C}}^{(t+1)} =   \tilde{\mathbf{C}}^{(t)} + \alpha_{\mathbf{C}}^{(t)} \nabla_{\tilde{\mathbf{C}}} \mathcal{H}_{\gamma, \omega} ( \tilde{\mathbf{C}}^{(t)}),
   \label{eq:W_RF_update_EE}
\end{equation}
and the projection onto the constraint~\eqref{lb_constraint} can be done by normalizing each column of $\tilde{\mathbf{C}}$ to have unit norm, i.e.,
\begin{equation}
\tilde{\mathbf{c}}_{k,\ell}^{(t+1)}   = \bar{\mathbf{c}}_{k,\ell}^{(t+1)} / || \bar{\mathbf{c}}_{k,\ell}^{(t+1)} ||_2,  \ \forall \ell, k.
    \label{eq:W_RF_proj_EE}
\end{equation}

\vspace{-0.3cm}

\subsection{The Slack Variable $\mu$ } 
The optimization problem with respect to the slack variable $\mu$ is
\begin{subequations}
\begin{align}
            \max_{ \mu \geq 0}  \ \ \ & \mathcal{H}_{\gamma, \omega}
     (\mu).
\end{align} 
\end{subequations}
 
 To find the optimal value of $\mu$, we find the partial derivative $ \frac{\partial \mathcal{H}_{\gamma, \omega} (\mu) }{\partial \mu} $ and equate it to zero to obtain
 \begin{equation}
     \frac{\partial \mathcal{H}_{\gamma, \omega} (\mu)}{\partial \mu} = - \gamma - \frac{1}{\omega}  \mathcal{G} (\mu) = 0,
 \end{equation}
which has a solution of
\begin{equation}
\begin{split}
    \bar{\mu} & =  \Big(  \sum_{k=1}^K  \frac{1}{w_k} \sum_{\ell=1}^L \underline{R}_{k,\ell} \Big)^2  - \rho K  \sum_{k=1}^K  \Big( \frac{1}{w_k} \sum_{\ell=1}^L \underline{R}_{k,\ell} \Big)^2  
    - \gamma \omega.
    \end{split}
\end{equation}
Then the update of $\mu$ after projecting onto the feasible set is
\begin{equation}
    \mu^{(t+1)} = \max \big( 0,\bar{\mu} \big).
    \label{eq:mu_update}
\end{equation}

Algorithm \ref{alg:S2} summarizes the proposed penalty-based projected gradient ascent approach.

\begin{algorithm}  
\caption{AO algorithm based on projected gradient ascent and PDD method}
\label{alg:S2}                          
\begin{algorithmic} [1]                   
\State \textbf{Input:}  
$\rho \in [0,1]$,
$\gamma \geq 0$,
$\omega > 0$,
$\psi \in [0,1]$,
$\mathbf{D}^{(0)} $, 
$\mathbf{A}^{(0)}$, 
$\boldsymbol{\Theta}^{(0)} $, 
$\tilde{\mathbf{C}}^{(0)}$, 
$\mu^{(0)}=0$,
$\underline{\eta}^*$,
$\rho \in [0,1]$,
$\gamma$,
$\omega$,
$t = 0$,
 and $\varepsilon > 0$.
\State \textbf{While} $\mu^{(t)} > 0$  \textbf{do} \{
\State \indent \textbf{do \{ } 
\State \indent \indent Update $\bar{\mathbf{D}}^{(t+1)} \leftarrow \mathbf{D}^{(t)} + \alpha_{\mathbf{D}}^{(t)}  \nabla_{\mathbf{D}} \mathcal{H}_{\gamma, \omega} (\mathbf{D}^{(t)})$ \indent \indent using~\eqref{eq:F_BB_update_f}. 
\State \indent \indent Project onto the constraint \eqref{Pow_con} using \eqref{eq:proj_f_bb}.

\Statex 
\State \indent \indent Update $\bar{\mathbf{a}}^{(t+1)} \leftarrow \mathbf{a}^{(t)} + \alpha_{\mathbf{a}}^{(t)}  \nabla_\mathbf{a} \mathcal{H}_{\gamma, \omega}   (\mathbf{a}^{(t)})$ \indent 
 \indent using~\eqref{eq:F_RF_update_EE}. 
\State \indent \indent Project onto the constraint \eqref{CM_BS} using \eqref{eq:proj_f_RF}.

\Statex  
\State \indent \indent Update $\bar{\boldsymbol{\theta}}^{(t+1)} \leftarrow \boldsymbol{\theta}^{(t)} + \alpha_{\boldsymbol{\theta}}^{(t)} \nabla_{\boldsymbol{\theta}} \mathcal{H}_{\gamma, \omega}(\boldsymbol{\theta}^{(t)})$ \indent \indent using~\eqref{eq:F_RIS_update_EE}.
\State \indent \indent  Project onto the constraint \eqref{ps_con_EEf} using \eqref{eq:proj_theta}.

\Statex  
\State \indent \indent Update $\bar{\mathbf{C}}^{(t+1)} \leftarrow   \tilde{\mathbf{C}}^{(t)} + \alpha_{\mathbf{C}}^{(t)} \nabla \mathcal{H}_{\gamma, \omega} (\tilde{\mathbf{C}}^{(t)})$ \indent \indent using~\eqref{eq:W_RF_update_EE}.
\State \indent \indent  Project onto the constraint \eqref{lb_constraint} using \eqref{eq:W_RF_proj_EE}.

\Statex
\State \indent \indent Update $\mu^{(t+1)}$ using \eqref{eq:mu_update}. 
\Statex
\State \indent \indent Update $t \leftarrow t + 1.$
\State \indent  \textbf{ \} } 
\State \indent  \textbf{until} \big| $\mathcal{H}_{\gamma, \omega} (\mathbf{D}^{(t)}, 
\mathbf{A}^{(t)}, 
\boldsymbol{\Theta}^{(t)} , 
\tilde{\mathbf{C}}^{(t)}, \mu^{(t)}) -\mathcal{H}_{\gamma, \omega}  ( $ \indent $\mathbf{D}^{(t-1)},  
 \mathbf{A}^{(t-1)}, 
\boldsymbol{\Theta}^{(t-1)} , 
\tilde{\mathbf{C}}^{(t-1)},\mu^{(t-1)}) \big| \leq \varepsilon$.
\State \indent Update $\gamma \leftarrow \gamma + \frac{1}{\omega} \mathcal{H}_{\gamma, \omega}
    (\mathbf{D}^{(t)},  
\mathbf{A}^{(t)},  
\boldsymbol{\Theta}^{(t)} , 
\tilde{\mathbf{C}}^{(t)}, $ \indent $ \mu^{(t)}) $.
\State \indent Update $ \omega \leftarrow \psi \omega$.
\State \}
\State \textbf{Output:} $\underline{\eta}^* = \underline{\eta} (\mathbf{D}^{(t)},\mathbf{A}^{(t)}, \boldsymbol{\Theta}^{(t)}, \tilde{\mathbf{C}}^{(t)} )$.
\end{algorithmic}
\end{algorithm}
\vspace{-0.3cm}
\subsection{Convergence} 
The employed algorithms (i.e., AO, projected gradient ascent, and the PDD method) are proven to converge to a locally optimal solution, as was demonstrated in~\cite{AO_con_proof, PGA_con_proof, pen}. Due to the nature of the targeted optimization problem, finding a global solution poses a challenge. However, a satisfactory locally optimal solution can be obtained by solving the optimization problem multiple times with different random initializations, as will be illustrated in Section~\ref{ssec:conv}.

\subsection{Complexity Analysis} \label{sec:complexity}
In this subsection, we provide an analysis of the computational complexity of the proposed method, defined as the number of required complex-valued multiplications. Each variable update and projection for Algorithm~\ref{alg:S2} is dominated by specific vector multiplications, including inner and outer products. Table~\ref{tab:complexity} outlines the complexity of updating and projecting each variable.

\begin{table}
\footnotesize
\centering
\caption{Computational complexity.}
\begin{tabular} {|| m{1cm} | m{4.5cm} | m{1.7cm} ||} 
 \hline  Variable & Update complexity & Projection complexity  \\  [0.5ex] 
 \hline 
   $\mathbf{D}$ & $\mathcal{O}\big( KLM(MLN_\mathsf{T}N_\text{RIS} + N_\text{RIS}^2 + N_\text{RIS} N_\mathsf{R} + KL) \big)$ & $\mathcal{O}\big( K^2 L^2 M \big)$\\
 \hline
  $\mathbf{A}$ & $\mathcal{O}\big( K^2 L^2MN_\mathsf{T} (MN_\mathsf{T}N_\text{RIS} + N_\text{RIS}^2 + N_\text{RIS} N_\mathsf{R}) \big)$ & $\mathcal{O}\big( MN_\mathsf{T} \big)$\\
 \hline
 $\boldsymbol{\Theta}$ & $\mathcal{O}\big( K^2 L^2 N_\text{RIS} (N_\mathsf{R}+ M^2 N_\mathsf{T} +  N_\text{RIS} \big)$ & $\mathcal{O}\big( N_\text{RIS} \big)$\\
 \hline
   $\tilde{\mathbf{C}}$ & $\mathcal{O}\big( K^2 L^2 N_\mathsf{R} ( N_\text{RIS}^2 + M^2 N_\mathsf{T} + N_\mathsf{R}) \big)$ & $\mathcal{O}\big( KLN_\mathsf{R} \big)$\\
 \hline
\end{tabular}
\label{tab:complexity}
\vspace{-0.2cm}
\end{table}

In practical systems, the number of RIS elements is significantly larger than the number the number of BS antennas, the number of UE antennas, and the number of data streams i.e., $N_\text{RIS} \gg \max(MN_\mathsf{T},N_\mathsf{R},LK)$. In this scenario, the computational complexity of one inner iteration of Algorithm~\ref{alg:S2} is primarily dominated by the term $\mathcal{O}\big( L^2 K^2N_\text{RIS}^2 (MN_\mathsf{T} + N_\mathsf{R}) \big)$.
\section{Numerical Simulations} \label{sec:sim}
In this section, we conduct extensive simulations to demonstrate the performance achievable via the proposed method. We consider two state-of-the-art benchmark schemes for comparison, the EE-only maximization scheme of~\cite{EE_opt3} (both without and with rate constraints), and the EE fairness maximization scheme of~\cite{fairness_3}.

In our simulations, we average the performance metric over 1000 independent channel realizations, with small-scale fading, user locations, and QoS weights randomly varied within confined ranges. Table~\ref{tab:sim_params} provides a summary of the simulation parameters; these are adopted in all simulations unless stated otherwise.
\begin{table}
\footnotesize
\centering
\caption{Simulation parameters.}
\begin{tabular} {||m{1.1cm} |m{1.0cm} || m{1.9cm} | m{2.2cm} ||} 
 \hline 
 Parameter & Value & Parameter & Value \\  [0.5ex] 
 \hline 
  $M$ & 8 &  $P_{\theta}$  &  \unit[1]{dBm}  \\
 \hline
  $N_\mathsf{T}$ & 4 & $\xi$ & 1.2 \\
 \hline
  $N_\text{RIS}$ &  64 &  $\rho$ & 0.75 \\
    \hline
    $K$ & 4 & $\varepsilon$ & $1\times 10^{-3}$ \\
    \hline
    $L$ & 2 &  $\sigma^2$ &  \unit[-37]{dBm} \\
    \hline
      $N_\mathsf{R}$ &  4 &  $w_k$ range & $[1,5]$ \\
    \hline
    Carrier frequency & $ \unit[28]{GHz}$ & Location of the reference AE at the BS & $\unit[(0,0,10)]{m}$ \\
    \hline
    Total bandwidth & $ \unit[200]{MHz}$ & Location of the center of the RIS & $\unit[(15,-15,5)]{m}$ \\
    \hline
    $P_\text{max}$ & \unit[40]{dBm}&  $k$-th UE’s location range & $[\unit[(15,-15,0)]{m},$ $\unit[(30,15,2)]{m}]$ \\
    \hline
    $P_\text{BS}$ & \unit[9]{dBW} & $k$-th user’s imperfect CSI bound & $0.2 || \hat{\mathbf{H}}_k||_\mathsf{F}$\\
    \hline
    $P_{\text{UE},k}$ & \unit[5]{dBm} & Rate constraints for~\cite{EE_opt3} & $ R_k \geq  0.1 w_k \log_2 \big( \frac{P_\text{max}}{K\sigma^2} \big)$ \\
    \hline
     $P_{\text{RF,T}}$  & \unit[1]{dBm} &  & \\
     \hline
\end{tabular}
\label{tab:sim_params}
\vspace{-0.5cm}
\end{table}

\subsection{Convergence} \label{ssec:conv}
\begin{figure*}
  \centering
  \begin{tabular}{c c}
    \includegraphics[width=0.83\columnwidth]{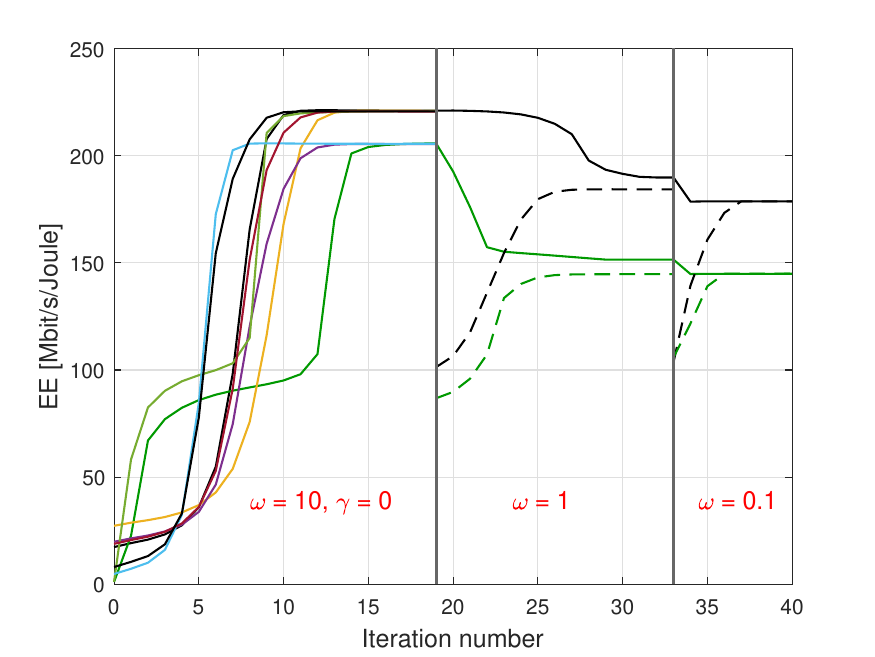} &
      \includegraphics[width=0.83\columnwidth]{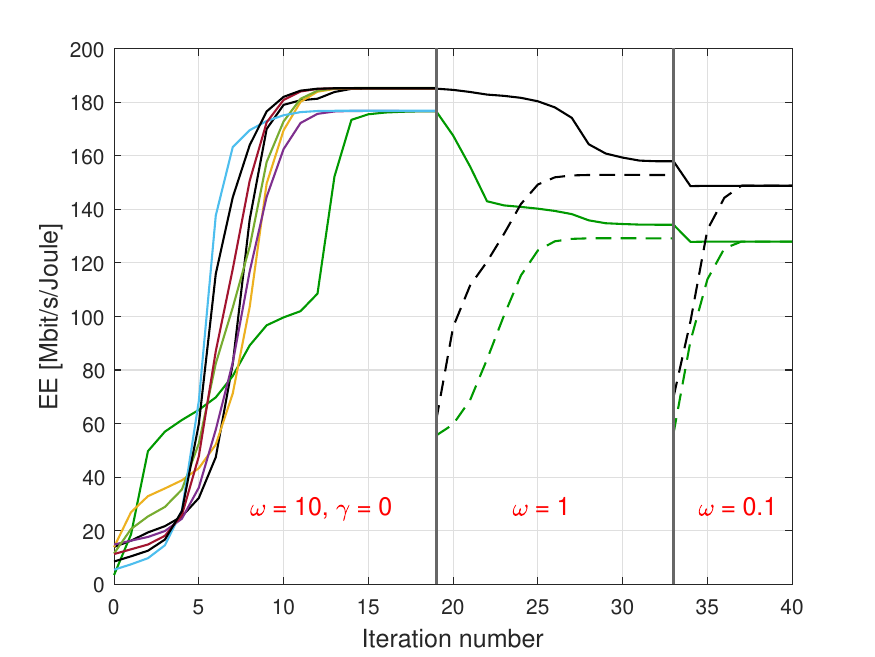}  \\
      \scriptsize (a) Perfect CSI.   &
      \scriptsize (b) Imperfect CSI.  \\
  \end{tabular}
    \medskip
  \caption{EE (solid) and the augmented Lagrangian function (dashed) versus iteration number for the cases of perfect and imperfect CSI.}
  \label{fig:conv}
  \vspace{-0.5cm}
\end{figure*}

We first examine the convergence of the algorithm by plotting the EE and the augmented Lagrangian function versus the iteration number for both perfect and imperfect CSI cases in Fig.~\ref{fig:conv}. We use an initial step size of one to update each variable and progressively decrease it by half in the case of overshooting (i.e., when we observe a decrease in the objective function). We initiate the process by setting the values of $\gamma$ and $\omega$ to 0 and 10, respectively. Using eight random initial points that satisfy the constraint set, we observe convergence towards two sub-optimal points due to the non-convexity of the problem. In this initial stage, the weight of the constraint in the augmented Lagrangian function is small. Therefore, the algorithm primarily focuses on optimizing the EE part of the objective function. Since the value of the EE in this case is almost equal to that of the augmented Lagrangian function, we omit the plot of the augmented Lagrangian function. Subsequently, the parameter $\omega$ is reduced by a factor of 10 to accentuate the constraint more in the augmented Lagrangian function until it is satisfied.

At the final iteration, the augmented Lagrangian function converges to the value of the EE objective function. This convergence is expected, as both functions should have the same value when the constraint~\eqref{F_const} is satisfied. The algorithm eventually converges to two EE values (179 and \unit[145]{Mbit/sec/Joule} for the perfect CSI scenario, and 149 and \unit[128]{Mbit/sec/Joule} for the imperfect CSI scenario). To ensure convergence to a satisfactory solution, we utilize the random initialization technique in all simulations by employing five initial points and selecting the highest final EE value.

\subsection{Effect of Varying the Fairness Requirement} \label{sec:CEE}

\begin{figure} 
         \centering
         \includegraphics[width=0.83\columnwidth]{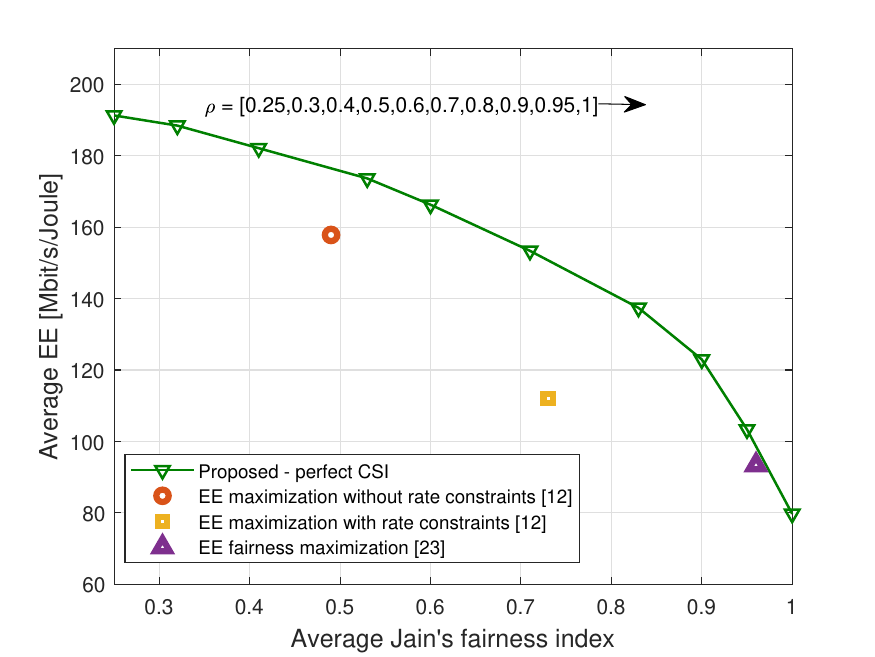}
        \caption{Average EE versus Jain's fairness index for different values of $\rho$.}
        \label{fig:rho}
        \vspace{-0.4cm}
\end{figure}

Fig.~\ref{fig:rho} demonstrates the EE-fairness trade-off achieved by varying the value of $\rho$. The general trend indicates that as $\rho$ increases, the EE decreases, and the Jain's fairness index increases. This relationship arises because $\rho$ controls the fairness requirement of the system at the cost of reducing the EE value. At one extreme, perfect fairness (Jain's fairness index of one) can be achieved when $\rho = 1$, but this comes at the expense of reducing the EE by 63\% of its optimal value when no fairness is imposed. At the other extreme, the maximum EE can be achieved with very poor fairness. The value of $\rho$ can be tuned to achieve a favorable and flexible trade-off between these two extremes.

Compared to the existing EE maximization method of~\cite{EE_opt3}, the proposed algorithm can achieve a better EE value with the same fairness index. This improvement is attributed to our proposed system model and transmission scheme, which are more suitable for EE applications than that of~\cite{EE_opt3}. While the poor fairness in~\cite{EE_opt3} can be enhanced by incorporating QoS constraints, it comes with a significant reduction in the EE. On the other hand, the fairness algorithm proposed in~\cite{fairness_3} can achieve a point close to the Pareto-optimal front generated by varying $\rho$ in our proposed algorithm. Nevertheless, the proposed method offers the flexibility of tuning $\rho$ to achieve the desired operating point of EE and fairness based on available resources and specified QoS requirements. 
\vspace{-0.3cm}
\subsection{Effect of Varying the CSI Error Bound} 
\begin{figure} 
         \centering
         \includegraphics[width=0.83\columnwidth]{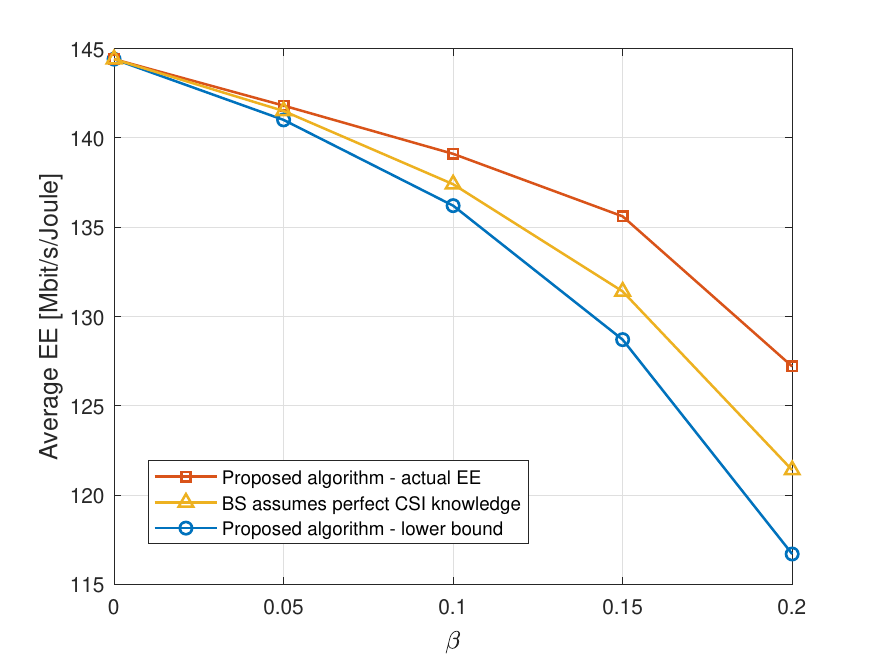}
        \caption{Behavior of the average EE with increasing CSI error bound.}
        \label{fig:CSI_bound}
        \vspace{-0.7cm}
\end{figure}

Fig.~\ref{fig:CSI_bound} shows the effect of increasing the CSI error bound, where the CSI error bound of each user is modeled as 
\begin{equation}
    \delta_k = \beta || \hat{\mathbf{H}}_k ||_\mathsf{F},
    \label{eq:last_eq}
\end{equation}
for some $\beta \in [0,0.2]$. We plot three curves: the true EE given in~\eqref{eq:EE} by optimizing the lower bound in~\eqref{eq:EE_lb}, the lower bound on the EE given in~\eqref{eq:EE_lb}, and the EE in the case when the BS assumes no error exists in the CSI.

It can be noticed that, compared to the case where the BS assumes perfect CSI knowledge, optimizing the lower bound instead of the original function can offer improvements in EE. However, this improvement comes at an additional cost in complexity, as more terms must be included in the gradient calculations. In conclusion, if the imperfect CSI bound is small, it might be more efficient for the BS to assume perfect CSI knowledge, as the improvement is not significant. Conversely, optimizing the EE lower bound offers a notable improvement for scenarios where the CSI uncertainty is relatively high.

\subsection{Effect of Varying the Transmit Power Budget}

\begin{figure*}
  \centering
  \begin{tabular}{c c}
    \includegraphics[width=0.83\columnwidth]{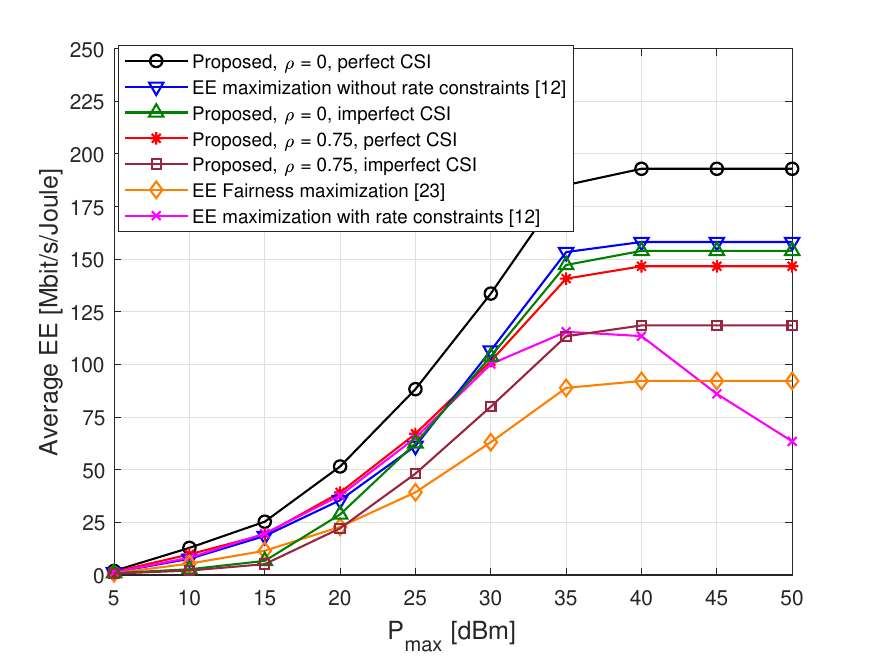} &
      \includegraphics[width=0.83\columnwidth]{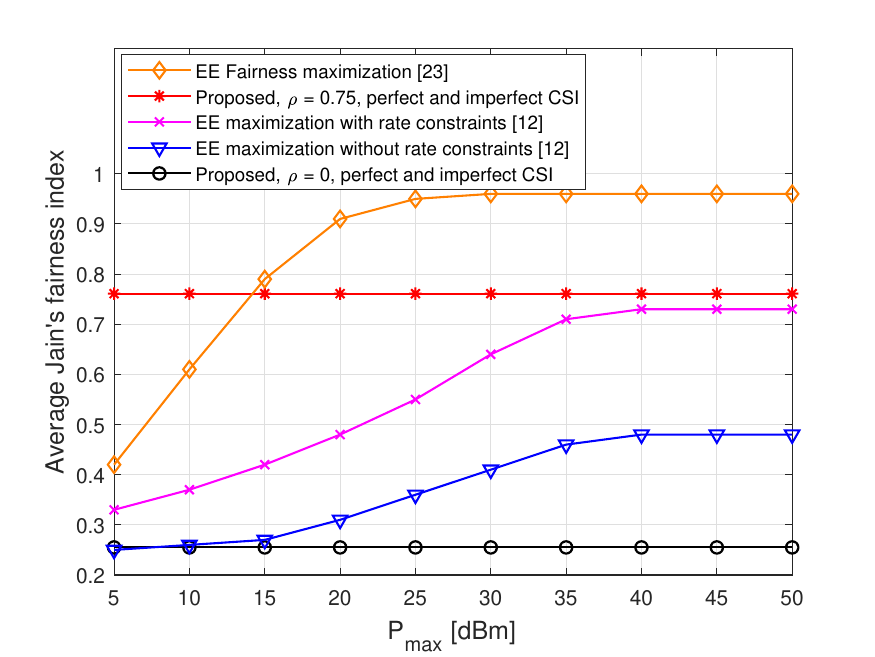}  \\
      \scriptsize (a) Average EE.   &
      \scriptsize (b) Average Jain's fairness index.  \\
  \end{tabular}
    \medskip
  \caption{Average EE and Jain's fairness index as the transmit power budget increases.}
  \label{fig:P_max}
  \vspace{-0.7cm}
\end{figure*}

Fig.~\ref{fig:P_max} shows the average EE and Jain's fairness index as the maximum transmit power budget increases. It can be noted that both EE and fairness saturate when the transmit power budget exceeds \unit[40]{dBm}. This indicates that the extra power is not being used as it does not help in optimizing the EE or fairness. This can be attributed to the fact that the numerator of the EE objective function is a logarithmic function of the transmit power while the denominator is a linear function of the transmit power. In other terms, the rate of increase of the numerator (sum-rate) decreases with increasing $P_\text{max}$, while that of the denominator (total power consumption) is increasing in a constant rate.

Furthermore, it can be noted that $\rho=0$ in the proposed method can achieve a better EE than the state-of-art methods for three main reasons. First, the AoSA architecture is more energy-efficient than the fully-connected one. This conclusion matches observation made in~\cite{AoSA_motivation}, where the AoSA and fully-connected architectures are compared from this perspective. Second, the proposed system has a greater capability of increasing the EE as it features two additional sets of variables: the analog precoder at the BS and the combiners at the UEs. Third, unlike the benchmark methods, ZF and orthogonal transmission have not been assumed. While orthogonal transmission clearly limits the system capabilities, ZF is not ideal in the case of correlated, ill-conditioned, and rank-deficient mmWave channels. Nevertheless, considering only EE leads to very poor fairness, as in most practical cases all resources are allocated to serve one user only. This issue can be resolved by targeting fairness optimization directly as in~\cite{fairness_3}, however, the resulting EE is poorer than that achieved by other methods. The proposed approach can achieve a better EE and fairness trade-off than these existing approaches, as optimizing the EE with a Jain's fairness index requirement of 0.75 can lead to highly favorable values for both the EE and user fairness. On the one hand, although incorporating rate constraints can help to improve fairness, it can degrade the system EE significantly when these constraints depend on the transmit power budget. Nevertheless, setting the values for the rate constraints independently of the transmit power budget may lead to feasibility issues for the optimization problem.

\subsection{Effect of Varying the Number of RIS elements}
\begin{figure} 
         \centering
         \includegraphics[width=0.83\columnwidth]{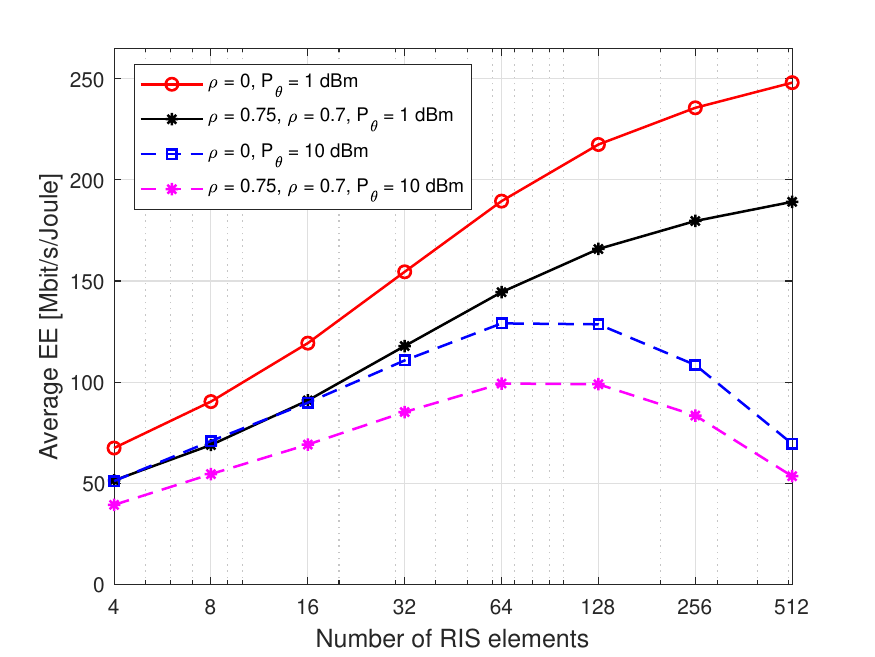}
        \caption{Average EE as the number of RIS elements increases.}
        \label{fig:N_RIS}
        \vspace{-0.7cm}
\end{figure}

In Fig. \ref{fig:N_RIS}, we plot the average EE as the number of RIS elements increases for two values of the power consumption at each RIS element: $P_\theta = \unit[1]{dBm}$ and  $P_\theta = \unit[10]{dBm}$. We can notice that adding extra RIS elements provides a substantial increase in the EE for all cases when the number of RIS elements is small. However, as the number of RIS elements becomes larger than 64, this increase begins to slow down in the case of $P_\theta = \unit[1]{dBm}$, while the average EE starts to decrease for $P_\theta = \unit[10]{dBm}$. This eventual decrease in the EE occurs due to the fact that although more RIS elements offer more degrees of freedom to enhance the system performance, it provides a logarithmic increase in the numerator of the EE but a linear increase in the denominator.

\section{Conclusion} \label{sec:conc}
This paper has presented a method for robust design of the hybrid analog and digital precoder at the BS, RIS reflection matrix, and digital combiners at the UEs in order to maximize the system EE while imposinga specified Jain's fairness index in RIS-assisted mmWave systems with imperfect channel state information. To achieve this, a lower bound based on the triangle and Cauchy-Schwarz inequalities was employed. The penalty dual decomposition method was then used to address the challenging fairness constraint, and the projected gradient ascent method was utilized to obtain the solution to the optimization problem. The simulation results show that the proposed method can provide a very good tradeoff between EE and fairness, while also offering a high degree of flexibility in tuning the EE and user fairness in order to prioritize one of these metrics over the other.

 \appendices
\vspace{-0.3cm}
  \section{Proof of Proposition \ref{prop:1}} \label{AppndA}
  The signal-to-interference-plus-noise ratio (SINR) in~\eqref{eq:Rk} consists of three components, the desired signal term, the interference term, and the noise term.
  
  To obtain a lower bound on the desired signal term, we seek the error matrix that minimizes that term by solving the following convex optimization problem:
\begin{subequations}
\begin{align}
  \min_ {\boldsymbol{\Lambda}_k} \ \ & \mathcal{D}_k(\boldsymbol{\Lambda}_k) = \big| \mathbf{c}_{k,\ell}^\mathsf{H} \big( \hat{\mathbf{G}}_{k}  (\boldsymbol{\Theta}) + \boldsymbol{\Lambda}_k \big) \mathbf{A} \mathbf{d}_{k,\ell } \big|^2, \label{eq:apx_of} \\
   \text{s.t.} \ \ & || \boldsymbol{\Delta}_k ||_\text{F} \leq \delta_k.
  \label{eq:op_app}
   \end{align}
  \end{subequations}
  We can rewrite the objective function~\eqref{eq:apx_of} as
  \begin{equation}
  \begin{split}
       \mathcal{D}_k(\boldsymbol{\Delta}_k)  = \big| \big( (\mathbf{A} \mathbf{d}_{k,\ell }  )^\mathsf{T} \otimes \mathbf{c}_{k,\ell}^\mathsf{H} \big) \big( \hat{\mathbf{H}}_{k} + \boldsymbol{\Delta}_k \big)   \boldsymbol{\theta} \big|^2.
         \end{split}
      \label{eq:obj_fun_proof}
  \end{equation}
  We observe that~\eqref{eq:obj_fun_proof} attains its minimum when $\boldsymbol{\Delta}_k = - \hat{\mathbf{H}}_{k}$. By projecting this solution onto the feasible set, the optimal solution is found as $\boldsymbol{\Delta}_k^* = - \min \big( 1, \delta_k / || \hat{\mathbf{H}}_{k}  ||_{\mathsf{F}} \big) \hat{\mathbf{H}}_{k}$. Assuming the employed channel estimation algorithm is reliable enough such that $|| \hat{\mathbf{H}}_{k}  ||_{\mathsf{F}} \geq \delta_k$, the optimal solution becomes $\boldsymbol{\Delta}_k^* = - \delta_k \hat{\mathbf{H}}_{k} / || \hat{\mathbf{H}}_{k}  ||_{\mathsf{F}}$. Substituting this solution into the objective function~\eqref{eq:apx_of}, we obtain a lower bound on the desired signal component as:
  \begin{equation}
         \mathcal{D}_k(\boldsymbol{\Delta}_k) \geq \Big( 1-\frac{\delta_k}{|| \hat{\mathbf{H}}_{k}  ||_{\mathsf{F}}} \Big)^2 \big| \mathbf{c}_{k,\ell}^\mathsf{H}  \hat{\mathbf{G}}_{k}  (\boldsymbol{\Theta} )  \mathbf{A} \mathbf{d}_{k,\ell } \big|^2.
  \end{equation}
  
  For the interference term, we use the triangle inequality followed by the Cauchy–Schwarz inequality to obtain
\begin{equation}
\begin{split}
    & \big| \mathbf{c}_{k,\ell}^\mathsf{H}  \big( \hat{\mathbf{G}}_k (\boldsymbol{\Theta}) + \boldsymbol{\Lambda}_k \big) \mathbf{A} \mathbf{d}_{i,v } \big| \\ 
    & \leq | \mathbf{c}_{k,\ell}^\mathsf{H}  \hat{\mathbf{G}}_k (\boldsymbol{\Theta}) \mathbf{A} \mathbf{d}_{i,v}| +  |\mathbf{c}_{k,\ell}^\mathsf{H} \boldsymbol{\Lambda}_k \mathbf{A} \mathbf{d}_{i,v}| \\
    & \leq | \mathbf{c}_{k,\ell}^\mathsf{H}  \hat{\mathbf{G}}_k (\boldsymbol{\Theta}) \mathbf{A} \mathbf{d}_{i,v}| + || \mathbf{c}_{k,\ell} ||_2 || \boldsymbol{\Lambda}_k ||_\mathsf{F} || \mathbf{A} \mathbf{d}_{i,v }||_2.
    \end{split}
\end{equation}
Using the facts $|| \boldsymbol{\Lambda}_k ||_\mathsf{F} \leq || \boldsymbol{\Delta}_k ||_\mathsf{F} || \boldsymbol{\theta}_k ||_2 \leq \delta_k \sqrt{N_\mathsf{RIS}}$ and $|| \mathbf{A} \mathbf{d}_{i,v }||_2 = || \mathbf{d}_{i,v }||_2$, we can obtain
 \begin{equation}
 \begin{split}
  | \mathbf{c}_{k,\ell}^\mathsf{H}  \mathbf{G}_k (\boldsymbol{\Theta}) \mathbf{A} \mathbf{d}_{i,v } | \leq &  | \mathbf{c}_{k,\ell}^\mathsf{H}  \hat{\mathbf{G}}_k (\boldsymbol{\Theta}) \mathbf{A} \mathbf{d}_{i,v }| \\
 & + \delta_k \sqrt{N_\mathsf{RIS}} || \mathbf{c}_{k,\ell} ||_2 || \mathbf{d}_{i,v }||_2.
  \end{split}
 \end{equation}

 The noise term can be upper bounded as
 \begin{equation}
 \mathbb{E}\{ |\mathbf{c}_{k,\ell}^\mathsf{H} \mathbf{n}_k|^2 \} \leq || \mathbf{c}_{k,\ell} ||_2^2 \sigma^2.
  \end{equation}
 Thus, \eqref{eq:ny_eq} can be then obtained after dividing all terms by $|| \mathbf{c}_{k,\ell} ||_2^2$.



%





\ifCLASSOPTIONcaptionsoff
  \newpage
\fi





\bibliographystyle{IEEEtran}
\bibliography{IEEEabrv,Bibliography}
%


\end{document}